\newtheorem{definition}{Definition}
\newtheorem{theorem}{Theorem}
\newcommand{\parhead}[1]{\noindent\textbf{#1}.}
\newcommand{\secp}{\kappa}
\newcommand{\sample}{\stackrel{\$}{\leftarrow}}
\newcommand{\fixed}{\text{fixed}}
\newcommand{\free}{\text{free}}
\newcommand{\lf}{lf}
\newcommand{\LF}{LF}
\newcommand{\adv}{\mathcal{A}}
\newcommand{\hyb}{\mathcal{H}}
\newcommand{\us}{\textmu{}s}
\newcommand{\uA}{\textmu{}A}
\newcommand{\uTESLA}{\textmu{}TESLA}
\DeclareMathAlphabet{\mathcal}{OMS}{cmsy}{m}{n}
\newcommand{\enc}{WKD-IBE}
\newcommand{\newprot}{JEDI}
\newcommand{\leaf}{leaf}
\newcommand{\leaves}{leaves}
\newcommand{\ent}{principal}
\newcommand{\Ents}{Principals}
\newcommand{\ents}{principals}
\newcommand{\Anent}{A principal}
\newcommand{\anent}{a principal}
\let\OLDthebibliography\thebibliography
\renewcommand\thebibliography[1]{
  \OLDthebibliography{#1}
  \setlength{\parskip}{0pt}
  \setlength{\itemsep}{2pt plus 0ex}
}
\newcommand{\appref}[1]{Appendix \ref{#1}}
\newcommand{\secref}[1]{\S{}\ref{#1}}
\newcommand{\figref}[1]{Fig.~\ref{#1}}
\newcommand{\tablebullets}[1]{\vspace{-2ex}\begin{itemize}[leftmargin=*,noitemsep,topsep=0ex]#1\end{itemize}\vspace{-3ex}}
\newif\iffull
\titlespacing*{\section}{0pt}{4pt}{2pt}
\titlespacing*{\subsection}{0pt}{3pt}{2pt}
\titlespacing*{\subsubsection}{0pt}{2pt}{0pt}
\begin{document}

\date{}

\title{\Large \bf \newprot{}: Many-to-Many End-to-End Encryption and Key Delegation for IoT}

\author{
{\rm Sam Kumar, Yuncong Hu, Michael P Andersen, Raluca Ada Popa, and David E. Culler}\\
\textit{University of California, Berkeley} 
} 

\maketitle

\iffull
\else
\thispagestyle{empty}
\pagestyle{empty}
\fi

\begin{abstract}
As the Internet of Things (IoT) emerges over the next decade, developing secure communication for IoT devices is of paramount importance.
Achieving end-to-end encryption for large-scale IoT systems, like smart buildings or smart cities, is challenging because multiple \ents{} typically interact \emph{indirectly} via  intermediaries, meaning that the recipient of a message is not known in advance.
This paper proposes \newprot{} (\textbf{J}oining \textbf{E}ncryption and \textbf{D}elegation for \textbf{I}oT), a many-to-many end-to-end encryption protocol for IoT.
\newprot{} encrypts and signs messages end-to-end, while conforming to the decoupled communication model typical of IoT systems. \newprot{}'s keys support expiry and fine-grained access to data, common in IoT.
Furthermore, \newprot{} allows \ents{} to delegate their keys, restricted in expiry or scope, to other \ents{}, thereby granting access to data and managing access control in a scalable, distributed way.
Through careful protocol design and implementation, \newprot{} can run across the spectrum of IoT devices, including ultra low-power deeply embedded sensors severely constrained in CPU, memory, and energy consumption.
We apply \newprot{} to an existing IoT messaging system and demonstrate that its overhead is modest.

\end{abstract}

\section{Introduction}\label{sec:introduction}

As the Internet of Things (IoT) has emerged over the past decade, smart devices have become increasingly common.
This trend is only expected to continue, with tens of billions of new IoT devices deployed over the next few years~\cite{cisco2014internet}.
The IoT vision requires these devices to communicate to discover and use the resources and data provided by one another.
Yet, these devices collect privacy-sensitive information about users. A natural step to secure privacy-sensitive data is to use \emph{end-to-end encryption} to protect it during transit.

\newcommand{\myrequire}[1]{\vspace{0.5mm}\noindent$\triangleright$ \textbf{#1}}

Existing protocols for end-to-end encryption, such as SSL/TLS and TextSecure~\cite{frosch2016how}, focus on \emph{one-to-one} communication between two \ents{}: for example, Alice sends a message to Bob over an insecure channel. Such protocols, however, appear not to be a good fit for large-scale industrial IoT systems. Such IoT systems demand \textit{many-to-many} communication among \textbf{decoupled} senders and receivers, and require \textbf{decentralized delegation of access} to enforce which devices can communicate with which others.

We investigate existing IoT systems, which currently do not encrypt data end-to-end, to understand the requirements on an end-to-end encryption protocol like \newprot{}. We use \emph{smart cities} as an example application area, and data-collecting sensors in a large organization as a concrete use case. We identify three central requirements, which we treat in turn below:

\myrequire{Decoupled senders and receivers.}
IoT-scale systems could consist of thousands of \ents{}, making it infeasible for consumers of data (e.g., applications) to maintain a separate session with each producer of data (e.g., sensors). Instead, senders are typically \textbf{decoupled} from receivers. Such decoupling is common in \emph{publish-subscribe} systems for IoT, such as MQTT, AMQP, XMPP, and Solace~\cite{solace}.
In particular, many-to-many communication based on publish-subscribe is the \emph{de-facto} standard in smart buildings, used in systems like BOSS~\cite{haggerty2013boss}, VOLTTRON~\cite{volttron}, Brume~\cite{mehanovic2018brume} and bw2~\cite{andersen2017democratizing}, and adopted commercially in AllJoyn and IoTivity.
Senders publish messages by addressing them to \emph{resources} and sending them to a \emph{router}. Recipients \emph{subscribe} to a resource by asking the router to send them messages addressed to that resource.

Many systems for smart buildings/cities, like sMAP~\cite{haggerty2010smap}, SensorAct~\cite{arjunan2012sensoract}, bw2~\cite{andersen2017democratizing}, VOLTTRON~\cite{volttron}, and BAS~\cite{krioukov2012building}, organize resources as a \textbf{hierarchy}. A resource hierarchy matches the organization of IoT devices: for instance, smart cities contain buildings, which contain floors, which contain rooms, which contain sensors, which produce streams of readings. We represent each resource---a leaf in the hierarchy---as a Uniform Resource Indicator ({\bf URI}), which is like a file path. For example, a sensor that measures temperature and humidity might send its readings to the two URIs \path{buildingA/floor2/roomLHall/sensor0/temp} and \path{buildingA/floor2/roomLHall/sensor0/hum}. A user can subscribe to a URI prefix, such as \path{buildingA/floor2/roomLHall/*}, which represents a subtree of the hierarchy. He would then receive all sensor readings in room ``LHall.''

\begin{figure}[t]
    \centering
    \hbox{
        \hspace{-2ex}
        \includegraphics[width=\linewidth]{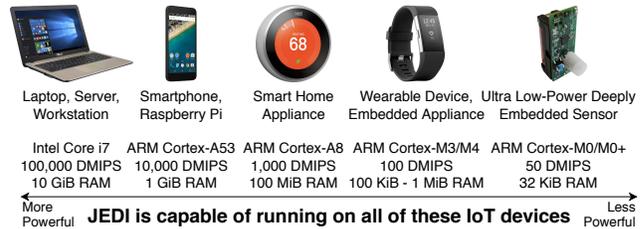}
    }
    \caption{IoT comprises a diverse set of devices, which span more than four orders of magnitude of computing power (estimated in Dhrystone MIPS).\protect\footnotemark}
    \label{fig:iot}
    \vspace{-3ex}
\end{figure}
\footnotetext{Image credits: \url{https://tweakers.net/pricewatch/1275475/asus-f540la-dm1201t.html}, \url{https://www.lg.com/uk/mobile-phones/lg-H791}, \url{https://www.bestbuy.com/site/nest-learning-thermostat-3rd-generation-stainless-steel/4346501.p?skuId=4346501}, \url{https://www.macys.com/shop/product/fitbit-charge-2-heart-rate-fitness-wristband?ID=2999458}}

\myrequire{Decentralized delegation.}
Access control in IoT needs to be fine-grained. For example, if Bob has an app that needs access to temperature readings from a single sensor, that app should receive the decryption key for only that one URI, even if Bob has keys for the entire room. In an IoT-scale system, it is not scalable for a central authority to individually give fine-grained decryption keys to each person's devices.
Moreover, as we discuss in \secref{sec:keyserver}, such an approach would pose increased security and privacy risks. Instead, Bob, who himself has access to readings for the entire room, should be able to delegate temperature-readings access to the app. Generally, \anent{} with access to a set of resources can give another \ent{} access to a subset of those resources.

Vanadium~\cite{taly2016distributed} and bw2~\cite{andersen2017democratizing} introduced \emph{decentralized delegation} (SPKI/SDSI~\cite{clarke2001certificate} and Macaroons~\cite{birgisson2014macaroons})  in the smart buildings space. Since then, decentralized delegation has become the state-of-the-art for access control in smart buildings,  especially those geared toward large-scale commercial buildings or organizations~\cite{fierro2015xbos, hviid2018activity}.
In these systems, \anent{} can access a resource if there exists a \emph{chain} of delegations, from the owner of the resource to that \ent{}, granting access. At each link in the chain, the extent of access may be qualified by \emph{caveats}, which add restrictions to which resources can be accessed and when. While these systems provide delegation of permissions, they do not provide protocols for encrypting and decrypting messages end-to-end.

\myrequire{Resource constraints.}
IoT devices vary greatly in their capabilities, as shown in \figref{fig:iot}. This includes devices constrained in CPU, memory, and energy, such as wearable devices and low-cost environmental sensors.

In smart buildings/cities, one application of interest is \emph{indoor environmental sensing}. Sensors that measure temperature, humidity, or occupancy may be deployed in a building; such sensors are \emph{battery-powered} and transmit readings using a \emph{low-power} wireless network. To see ubiquitous deployment, they must cost only \emph{tens of dollars} per unit and have \emph{several years} of battery life.
To achieve this price/power point, sensor platforms are heavily resource-constrained, with mere \emph{kilobytes} of memory (farthest right in \figref{fig:iot})~\cite{hamiltoniot, particlemesh, feldmeier2009personalized, li2014energy, brunelli2014povomon, andersen2017hamilton, andersen2016system}.
The \emph{power consumption} of encryption is a serious challenge, even more so than its latency on a slower CPU; the CPU and radio must be used sparingly to avoid consuming energy too quickly~\cite{ye2002energy, kim2018system}.
For example, on the sensor platform used in our evaluation, an average CPU utilization of merely 5\% would result in less than a year of battery life, \emph{even if the power cost of using the transducers and network were zero}.

\begin{figure}[t]
    \centering
    \includegraphics[width=\linewidth]{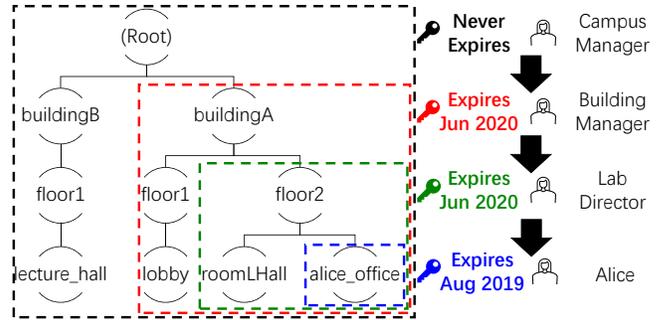}
    \caption{\newprot{} keys can be qualified and delegated, supporting decentralized, cryptographically-enforced access control via key delegation. Each person has a decryption key for the indicated resource subtree that is valid until the indicated expiry time. Black arrows denote delegation.}
    \label{fig:decentralized}
    \vspace{-4mm}
\end{figure}

\subsection{Overview of \newprot{}}\label{s:overview}\label{ssec:challenges}

This paper presents  \newprot{}, a \emph{many-to-many} end-to-end encryption protocol compatible with the above three requirements of  IoT systems. \newprot{} encrypts messages end-to-end for confidentiality,  signs them for integrity while preserving anonymity, and supports delegation with caveats, all while allowing senders and receivers to be decoupled via a resource hierarchy.
\newprot{} differs from existing encryption protocols like SSL/TLS, requiring us to overcome a number of \emph{challenges}:
\begin{enumerate}[leftmargin=*,topsep=0mm,noitemsep]
    \item Formulating a new system model for end-to-end encryption to support \textbf{decoupled senders and receivers} and \textbf{decentralized delegation} typical of IoT systems (\secref{s:intro_model})
    \item Realizing this expressive model while working within the \textbf{resource constraints} of IoT devices (\secref{s:intro_encryption})
    \item Allowing receivers to verify the integrity of messages, while preserving the anonymity of senders (\secref{s:intro_integrity})
    \item Extending \newprot{}'s model to support revocation (\secref{s:intro_revocation})
\end{enumerate}
Below, we explain how we address each of these challenges.

\subsubsection{\newprot{}'s System Model (\secref{sec:security})}\label{s:intro_model}

Participants in \newprot{} are called \emph{\ents{}}. Any \ent{} can create a \textbf{resource hierarchy} to represent some resources that it owns.
Because that \ent{} owns all of the resources in the hierarchy, it is called the \emph{authority} of that hierarchy.

Due to the setting of \textbf{decoupled senders and receivers}, the sender can no longer encrypt messages with the receiver's public key, as in traditional end-to-end encryption. Instead, \newprot{} models \ents{} as interacting with resources, rather than with other \ents{}. Herein lies the key difference between \newprot{}'s model and other end-to-end encryption protocols: \ul{the publisher of a message encrypts it according to the URI to which it is published, not the recipients subscribed to that URI.}
Only \ents{} permitted to subscribe to a URI are given keys that can decrypt messages published to that URI.

IoT systems that support \textbf{decentralized delegation} (Vanadium, bw2), as well as related non-IoT authorization systems (e.g., SPKI/SDSI~\cite{clarke2001certificate} and Macaroons~\cite{birgisson2014macaroons}) provide \ents{} with tokens (e.g., certificate chains) that they can present to prove they have access to a certain resource.
Providing tokens, however, is not enough for end-to-end encryption; unlike these systems, \newprot{} allows \emph{decryption keys} to be distributed via chains of delegations. Furthermore, the URI prefix and expiry time associated with each \newprot{} key can be restricted at each delegation.
For example, as shown in \figref{fig:decentralized}, suppose Alice, who works in a research lab, needs access to sensor readings in her office. In the past, the campus facilities manager, who is the authority for the hierarchy, granted a key for \path{buildingA/*} to the building manager, who granted a key for \path{buildingA/floor2/*} to the lab director. Now, Alice can obtain the key for \path{buildingA/floor2/alice_office/*} directly from her local authority (the lab director).

\subsubsection{Encryption with URIs and Expiry (\secref{sec:encryption})}\label{s:intro_encryption}

\newprot{} supports \emph{decoupled} communication. The resource to which a message is published acts as a \emph{rendezvous point} between the senders and receivers, used by the underlying system to route messages. Central to \newprot{} is the challenge of finding an analogous \emph{cryptographic rendezvous point} that senders can use to encrypt messages without knowledge of receivers. A number of IoT systems~\cite{shafagh2018droplet,perrig2001spins} use only simple cryptography like AES, SHA2, and ECDSA, but these primitives are not expressive enough to encode \newprot{}'s rendezvous point, which must support hierarchically-structured resources, non-interactive expiry, and decentralized delegation.

Existing systems \cite{wang2016sieve, wang2010hierarchical, wang2011hierarchical} with similar expressivity to \newprot{} use Attribute-Based Encryption (ABE)~\cite{goyal2006attribute, bethencourt2007ciphertext}. Unfortunately, ABE is not suitable for \newprot{} because it is too expensive, especially in the context of \textbf{resource constraints} of IoT devices. Some IoT systems rule it out due to its latency alone~\cite{shafagh2018droplet}.
In the context of low-power devices, encryption with ABE would also consume too much power.
\newprot{} circumvents the problem of using  ABE or basic cryptography with two insights: (1) Even though ABE is too heavy for low-power devices, this does not mean that we must resort to only symmetric-key techniques. We show that certain IBE schemes~\cite{abdalla2007generalized} can be made practical for such devices.
(2) \textbf{Time is another resource hierarchy}: a timestamp can be expressed as \path{year/month/day/hour}, and in this hierarchical representation, any time range can be represented efficiently as a logarithmic number of subtrees. With this insight, we can simultaneously support URIs and expiry via a nonstandard use of a certain type of IBE scheme: WKD-IBE~\cite{abdalla2007generalized}. Like ABE, WKD-IBE is based on bilinear groups (pairings), but it is an order-of-magnitude less expensive than ABE as used in \newprot{}.
To make \newprot{} practical on low-power devices, we design it to invoke WKD-IBE \emph{rarely}, while relying on AES most of the time, much like session keys. Thus, \newprot{} achieves expressivity commensurate to IoT systems that do not encrypt data---significantly more expressive than AES-only solutions---while allowing several years of battery life for low-power low-cost IoT devices.

\subsubsection{Integrity and Anonymity (\secref{sec:integrity})}\label{s:intro_integrity}
In addition to being encrypted, messages should be signed so that the recipient of a message can be sure it was not sent by an attacker.
This can be achieved via a certificate chain, as in SPKI/SDSI or bw2. Certificates can be distributed in a decentralized manner, just like encryption keys in \figref{fig:decentralized}.

Certificate chains, however, are insufficient if anonymity is required. For example, consider an office space with an occupancy sensor in each office, each publishing to the same URI \path{buildingA/occupancy}. In aggregate, the occupancy sensors could be useful to inform, e.g., heating/cooling in the building, but individually, the readings for each room could be considered privacy-sensitive. The occupancy sensors in different rooms could use different certificate chains, if they were authorized/installed by different people. This could be used to deanonymize occupancy readings.
To address this challenge, we adapt the WKD-IBE scheme that we use for end-to-end encryption to achieve an \textit{anonymous} signature scheme that can encode the URI and expiry and support decentralized delegation.
Using this technique, anonymous signatures are practical even on low-power embedded IoT devices.

\subsubsection{Revocation (\secref{sec:revocation})}\label{s:intro_revocation}

As stated above, \newprot{} keys support expiry.
Therefore, it is possible to achieve a lightweight revocation scheme by delegating each key with short expiry and periodically renewing it to extend the expiry.
To revoke a key, one simply does not renew it. We expect this expiry-based revocation to be sufficient for most use cases, especially for low-power devices, which typically just ``sense and send.''

Enforcing revocation cryptographically, without relying on expiration, is challenging. As we discuss in \secref{sec:revocation}, any cryptographically-enforced scheme that provides immediate revocation (i.e., keys can be revoked without waiting for them to expire) is subject to the fundamental limitation that the sender of a message must know which recipients are revoked when it encrypts the message.  \newprot{} provides a form of immediate revocation, subject to this constraint. We use techniques from tree-based broadcast encryption~\cite{naor2001revocation, dodis2002public} to encrypt in such a way that all decryption keys for that URI, \emph{except for ones on a revocation list}, can be used to decrypt. Achieving this is nontrivial because we have to combine broadcast encryption with \newprot{}'s semantics of hierarchical resources, expiry, and delegation. First, we modify broadcast encryption to support delegation, in such a way that if a key is revoked, all delegations made with that key are also implicitly revoked. Then, we integrate broadcast revocation, in a \emph{non-black-box} way, with \newprot{}'s encryption and delegation, as a third resource hierarchy alongside URIs and expiry.

\subsection{Summary of Evaluation}

For our evaluation, we use \newprot{} to encrypt messages transmitted over bw2~\cite{andersen2017democratizing, bw2}, a deployed open-source messaging system for smart buildings, and demonstrate that \newprot{}'s overhead is small in the critical path. We also evaluate \newprot{} for a commercially available sensor platform called ``Hamilton''~\cite{hamiltoniot}, and show that a Hamilton-based sensor sending one sensor reading every 30 seconds would see several years of battery lifetime when sending sensor readings encrypted with \newprot{}. As Hamilton is among the least powerful platforms that will participate in IoT (farthest to the right in \figref{fig:iot}), this validates that \newprot{} is practical across the IoT spectrum.

\section{\newprot's Model and Threat Model}\label{sec:security}\label{sec:keyserver}

\Anent{} can post a message to a resource in a hierarchy by encrypting it according to the resource's URI, hierarchy's public parameters, and current time, and passing it to the underlying system that delivers it to the relevant subscribers. Given the secret key for a resource subtree and time range, \anent{} can generate a secret key for a subset of those resources and subrange of that time range, and give it to another \ent{}, as in \figref{fig:decentralized}. The receiving \ent{} can use the delegated key to decrypt messages that are posted to a resource in that subset at a time during that subrange.

\textit{\newprot{} does not require the structure of the resource hierarchy to be fixed in advance}. In \figref{fig:decentralized}, the campus facilities manager, when granting access to \path{buildingA/*} to the building manager, need not be concerned with the structure of the subtree rooted at \path{buildingA}. This allows the building manager to organize \path{buildingA/*} independently.

\subsection{Trust Assumptions}
\Anent{} is trusted for the resources it owns or was given access to (for the time ranges for which it was given access). In other words, an adversary who compromises \anent{} can read all resources that \ent{} can read and forge new messages as if it were that \ent{}. In particular, an adversary who compromises the authority for a resource hierarchy gains control over that resource hierarchy.

\newprot{} allows each \ent{} to act as an authority for its own resource hierarchy in its own trust domain, without a single authority spanning all hierarchies. In particular, \emph{\ents{}} are not organized hierarchically; \anent{} may be delegated multiple keys, each belonging to a different resource hierarchy. In the example in \figref{fig:decentralized}, Alice might also receive \newprot{} keys from her landlord granting access to resources in her apartment building, in a separate hierarchy where her landlord is the authority. If Alice owns resources she would like to delegate to others, she can set up her own hierarchy to represent those resources. Existing IoT systems with decentralized delegation, like bw2 and Vanadium, use a similar model.

\begin{figure}[t!]
    \centering
    \includegraphics[width=\linewidth]{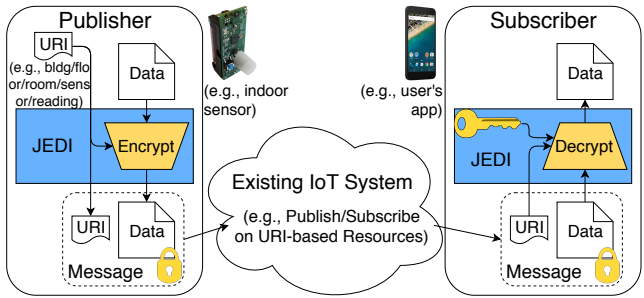}
    \caption{Applying \newprot{} to a smart buildings IoT system. Components introduced by \newprot{} are shaded. The subscriber's key is obtained via \newprot{}'s decentralized delegation (\figref{fig:decentralized}).}
    \vspace{-3ex}
    \label{fig:system}
\end{figure}

\subsection{Applying \newprot{} to an Existing System}
As shown in \figref{fig:system}, \newprot{} can be applied as a wrapper around existing many-to-many communication systems, including publish-subscribe systems for smart cities. The transfer of messages from producers to consumers is handled by the existing system. A common design used by such systems is to have a central broker (or router) forward messages; however, an adversary who compromises the broker can read all messages. In this context, \newprot{}'s end-to-end encryption protects data from such an adversary. Publishers encrypt their messages with \newprot{} before passing them to the underlying communication system (without knowledge of who the subscribers are), and subscribers decrypt them with \newprot{} after receiving them from the underlying communication system (without knowledge of who the publishers are).

\subsection{Comparison to a Na\"ive Key Server Model}
To better understand the benefits of \newprot{}'s model, consider the natural strawman of a trusted key server. This key server generates a key for every URI and time. A publisher encrypts each message for that URI with the same key. A subscriber requests this key from the trusted key server, which must first check if the subscriber is authorized to receive it.
The subscriber can decrypt messages for a URI using this key, and contact the key server for a new key when the key expires.
\newprot{}'s model is better than this key server model as follows:
\begin{itemize}[itemsep=0mm,leftmargin=*,topsep=0ex,noitemsep]
    \item \textit{Improved security.} Unlike the trusted key server, which must always be online, the authority in \newprot{} can delegate qualified keys to some \ents{} \emph{and then go offline}, leaving these \ents{} to qualify and delegate keys further. While the authority is offline, it is more difficult for an attacker to compromise it and easier for the authority to protect its secrets because it need only access them rarely.
    This reasoning is the basis of root Certificate Authorities (CAs), which access their master keys infrequently.
    In contrast, the trusted key server model requires a central trusted party (key server) to be online to grant/revoke access to any resource.

    \item \textit{Improved privacy.} No single participant sees all delegations in \newprot{}. An adversary in \newprot{} who steals an authority's secret key can decrypt all messages for that hierarchy, but still does not learn who has access to which resource, and cannot access separate hierarchies to which the first authority has no access.
    In contrast, an adversary who compromises the key server learns who has access to which resource and can decrypt messages for all hierarchies.

    \item \textit{Improved scalability.} In the campus IoT example above, if a building admin receives access to all sensors and all their different readings for a building, the admin must obtain a potentially very large number of keys, instead of one key for the entire building. Moreover, the campus-wide key server needs to grant decryption keys to each application owned by each employee or student at the university.
    Finally, the campus-wide key server must understand which delegations are allowed at lower levels in the hierarchy, requiring the entire hierarchy to be centrally administered.

\end{itemize}

\subsection{IoT Gateways}
Low-power wireless embedded sensors, due to power constraints, often do not use network protocols like Wi-Fi, and instead use specialized low-power protocols such as Bluetooth or IEEE 802.15.4. It is common for these devices to rely on an \emph{application-layer gateway} to send data to computers outside of the low-power network~\cite{zachariah2015internet}. This gateway could be in the form of a phone app (e.g., Fitbit), or in the form of a specialized border router~\cite{zigbeegateway, winter2012rpl}. In some traditional setups, the gateway is responsible for performing encryption/authentication~\cite{perrig2001spins}. \newprot{} accepts that gateways may be necessary for Internet connectivity, but does not rely on them for security---\newprot{}'s cryptography is lightweight enough to run directly on the low-power sensor nodes. This approach prevents the gateway from becoming a single point of attack; an attacker who compromises the gateway cannot see or forge data for any device using that gateway.

\subsection{Generalizability of \newprot{}'s Model}
Since \newprot{} decouples senders from receivers, it has no requirements on what happens at any intermediaries (e.g., does not require messages to be forwarded from publishers to subscribers in any particular way).
Thus, \newprot{} works even when messages are exchanged in a broadcast medium, e.g., multicast.
This also means that \newprot{} is more broadly applicable to systems with hierarchically organized resources. For example, URIs could correspond to filepaths in a file system, or URLs in a RESTful web service.

\subsection{Security Goals}
\newprot{}'s goal is to ensure that \ents{} can only read messages from or send messages to resources they have been granted access to receive from or send to.
In the context of publish-subscribe, \newprot{} also hides the content of messages from an adversary who controls the router.

\newprot{} does not attempt to hide metadata relating to the actual transfer of messages (e.g., the URIs on which messages are published, which \ents{} are publishing or subscribing to which resources, and timing).
Hiding this metadata is a complementary task to achieving delegation and end-to-end encryption in \newprot{}, and techniques from the secure messaging literature~\cite{cheng2016talek, corrigan2015riposte, van2015vuvuzela} will likely be applicable.

\section{End-to-End Encryption in \newprot{}}\label{sec:encryption}

A central question answered in this section is: How should publishers encrypt messages before passing them to the underlying system for delivery (\secref{ssec:uri_time})?
As explained in \secref{s:intro_encryption}, although ABE, the obvious choice, is too heavy for low-power devices, we identify \enc{}, a more lightweight identity-based encryption scheme, as sufficient to achieve \newprot{}'s properties. The primary challenge is to encode a sufficiently expressive rendezvous point in the \enc{} ID (called a \emph{pattern}) that publishers use to encrypt messages (\secref{ssec:uri_time}).

\subsection{Building Block: \enc{}}\label{ssec:enc}

We first explain \enc{}~\cite{abdalla2007generalized}, the encryption scheme that \newprot{} uses as a building block. Throughout this paper, we denote the security parameter as $\secp$.

In \enc{}, messages are encrypted with \emph{patterns}, and keys also correspond to patterns. A pattern is a list of values:  $P = (\mathbb{Z}_p^* \cup \{ \bot \})^\ell$. The notation $P(i)$ denotes the $i$th component of $P$, 1-indexed. A pattern $P_1$ \emph{matches} a pattern $P_2$ if, for all $i \in [1, \ell]$, either $P_1(i) = \bot$ or $P_1(i) = P_2(i)$. In other words, if $P_1$ specifies a value for an index $i$, $P_2$ must match it at $i$. Note that the ``matches'' operation is not commutative; ``$P_1$ matches $P_2$'' does not imply ``$P_2$ matches $P_1$''.

We refer to a component of a pattern containing an element of $\mathbb{Z}_p^*$ as \emph{fixed}, and to a component that contains $\bot$ as \emph{free}. To aid our presentation, we define the following sets:
\vspace{-1ex}
\begin{definition}
For a pattern $S$, we define:
\vspace{-1ex}
\begin{align*}
\fixed(S) &= \{ (i, S(i)) \mid S(i) \neq \bot \}\\
\free(S) &= \{ i \mid S(i) = \bot \}
\end{align*}
\end{definition}
\vspace{-1ex}
A key for pattern $P_1$ can decrypt a message encrypted with pattern $P_2$ if $P_1 = P_2$. Furthermore, a key for pattern $P_1$ can be used to derive a key for pattern $P_2$, as long as $P_1$ matches $P_2$. In summary, the following is the syntax for \enc{}.
\begin{itemize}[itemsep=0mm,leftmargin=*,topsep=0.4ex]

\item
$\mathbf{Setup}(1^\secp, 1^\ell) \rightarrow \mathsf{Params}, \mathsf{MasterKey}$;

\item
$\mathbf{KeyDer}(\mathsf{Params}, \mathsf{Key}_\mathsf{Pattern_A}, \mathsf{Pattern_B}) \rightarrow \mathsf{Key}_{\mathsf{Pattern_B}}$, derives a key for $\mathsf{Pattern_B}$,  where either $\mathsf{Key}_\mathsf{Pattern_A}$ is the $\mathsf{MasterKey}$, or $\mathsf{Pattern_A}$ matches $\mathsf{Pattern_B}$;

\item
$\mathbf{Encrypt}(\mathsf{Params}, \mathsf{Pattern}, m) \rightarrow \mathsf{Ciphertext}_{\mathsf{Pattern}, m}$;

\item
$\mathbf{Decrypt}(\mathsf{Key}_{\mathsf{Pattern}}, \mathsf{Ciphertext}_{\mathsf{Pattern}, m}) \rightarrow m$.
\end{itemize}

We use the \enc{} construction in \S{}3.2 of \cite{abdalla2007generalized}, based on BBG HIBE~\cite{boneh2005hierarchical}. Like the BBG construction, it has constant-size ciphertexts, but requires the maximum pattern length $\ell$ to be known at Setup time.
In this \enc{} construction, patterns containing $\bot$ can only be used in $\mathbf{KeyDer}$, not in $\mathbf{Encrypt}$; we extend it to support encryption with patterns containing $\bot$. We include the \enc{} construction with our optimizations in
\iffull
Appendix \ref{app:enc}.
\else
the appendix of our extended paper~\cite{fullpaper}.
\fi

\subsection{Concurrent Hierarchies in \newprot{}}\label{ssec:concurrent_hierarchies}
\enc{} was originally designed to allow delegation in a {\em single} hierarchy.
For example, the original suggested use case of \enc{} was to generate secret keys for a user's email addresses in all valid subdomains, such as \texttt{sysadmin@*.univ.edu}~\cite{abdalla2007generalized}.

\newprot{}, however, uses \enc{} in a nonstandard way to simultaneously support {\em multiple} hierarchies, one for URIs and one for expiry (and later in \secref{sec:revocation}, one for revocation), each in the vein of HIBE.
We think of the $\ell$ components of a \enc{} pattern as ``slots'' that are initially empty, and are progressively filled in with calls to $\mathbf{KeyDer}$.
To combine a hierarchy of maximum depth $\ell_1$ (e.g., the URI hierarchy) and a hierarchy of maximum depth $\ell_2$ (e.g., the expiry hierarchy), one can $\mathbf{Setup}$ \enc{} with the number of slots equal to $\ell = \ell_1 + \ell_2$. The first $\ell_1$ slots are filled in left-to-right for the first hierarchy and the remaining $\ell_2$ slots are filled in left-to-right for the second hierarchy (\figref{fig:slots}).

\subsection{Overview of Encryption in \newprot{}}\label{ssec:prot}

Each \ent{} maintains a \textbf{key store} containing \enc{} decryption keys.
To create a resource hierarchy, any \ent{} can call the \enc{} $\mathbf{Setup}$ function to create a resource hierarchy. It releases the \emph{public parameters} and stores the \emph{master secret key} in its key store, making it the authority of that hierarchy. To \ul{delegate} access to a URI prefix for a time range, \anent{} (possibly the authority) searches its key store for a set of keys for a superset of those permissions. It then qualifies those keys using $\mathbf{KeyDer}$ to restrict them to the specific URI prefix and time range (\secref{ssec:key_set}), and sends the resulting keys to the recipient of the delegation.\footnote{\newprot{} does not govern \emph{how} the key set is transferred to the recipient, as there are existing solutions for this. One can use an existing protocol for one-to-one communication (e.g., TLS) to securely transfer the key set. Or, one can encrypt the key set with the recipient's (normal, non-\enc{}) public key, and place it in a common storage area.} The recipient \ul{accepts the delegation} by adding the keys to its key store.

Before sending a message to a URI, \anent{} \ul{encrypts} the message using \enc{}. The pattern used to encrypt it is derived from the URI and the current time (\secref{ssec:uri_time}), which are included along with the ciphertext. When \anent{} receives a message, it searches its key store, using the URI and time included with the ciphertext, for a key to \ul{decrypt} it.

In summary, \newprot{} provides the following API:

$\mathbf{Encrypt}(\mathsf{Message}, \mathsf{URI}, \mathsf{Time}) \rightarrow \mathsf{Ciphertext}$

$\mathbf{Decrypt}(\mathsf{Ciphertext}, \mathsf{URI}, \mathsf{Time}, \mathsf{KeyStore}) \rightarrow \mathsf{Message}$

$\mathbf{Delegate}(\mathsf{KeyStore}, \mathsf{URIPrefix}, \mathsf{TimeRange}) \rightarrow \mathsf{KeySet}$

$\mathbf{AcceptDelegation}(\mathsf{KeyStore}, \mathsf{KeySet}) \rightarrow \mathsf{KeyStore}'$

Note that the \enc{} public parameters are an implicit argument to each of these functions.
Finally, although the above API lists the arguments to $\mathbf{Delegate}$ as $\mathsf{URIPrefix}$ and $\mathsf{Time Range}$, \newprot{} actually supports succinct delegation over more complex sets of URIs and timestamps (see \secref{ssec:extensions}).

\subsection{Expressing URI/Time as a Pattern}\label{ssec:uri_time}

A message is encrypted using a pattern derived from (1) the URI to which the message is addressed, and (2) the current time.
Let $H: \{0, 1\}^* \rightarrow \mathbb{Z}_p^*$ be a collision-resistant hash function. Let $\ell = \ell_1 + \ell_2$ be the pattern length in the hierarchy's \enc{} system. We use the first $\ell_1$ slots to encode the URI, and the last $\ell_2$ slots to encode the time.

Given a URI of length $d$, such as \path{a/b/c} ($d=3$ in this example), we split it up into individual components, and append a special terminator symbol \texttt{\$}: \texttt{("a", "b", "c", \$)}. Using $H$, we map each component to $\mathbb{Z}_p^*$, and then put these values into the first $d+1$ slots.
If $S$ is our pattern, we would have $S(1) = H(\texttt{"a"})$, $S(2) = H(\texttt{"b"})$, $S(3) = H(\texttt{"c"})$, and $S(4) = H(\texttt{\$})$ for this example.
Now, we encode the time range into the remaining $\ell_2$ slots. Any timestamp, with the granularity of an hour, can be represented hierarchically as \texttt{(year, month, day, hour)}. We encode this into the pattern like the URI: we hash each component, and assign them to consecutive slots. The final $\ell_2$ slots encode the time, so the depth of the time hierarchy is $\ell_2$.
The terminator symbol \texttt{\$} is not needed to encode the time, because timestamps always have exactly $\ell_2$ components.
For example, suppose that \anent{} sends a message to \path{a/b} on June 8, 2017 at 6 AM.
The message is encrypted with the pattern in \figref{fig:slots}.

\begin{figure}
    \centering
    \includegraphics[width=\linewidth]{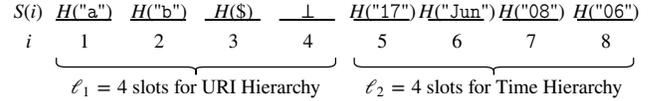}
    \caption{Pattern $S$ used to encrypt message sent to \texttt{a/b} on June 08, 2017 at 6 AM. The figure uses 8 slots for space reasons; \newprot{} is meant to be used with more slots (e.g., 20).}
    \label{fig:slots}
    \vspace{-3ex}
\end{figure}

\subsection{Producing a Key Set for Delegation}\label{ssec:key_set}

Now, we explain how to produce a key set corresponding to a URI prefix and time range. To express a URI prefix as a pattern, we do the same thing as we did for URIs, without the terminator symbol \texttt{\$}. For example, \path{a/b/*} is encoded in a pattern $S$ as $S(1) = H(\texttt{"a"})$, $S(2) = H(\texttt{"b"})$, and all other slots free. Given the private key for $S$, one can use \enc{}'s $\mathbf{KeyDer}$ to fill in slots $3 \ldots \ell_1$. This allows one to generate the private key for \path{a/b}, \path{a/b/c}, etc.---any URI for which \path{a/b} is a prefix. To grant access to only a specific resource (a full URI, not a prefix), the \texttt{\$} is included as before.

In encoding a time range into a pattern, single
timestamps (e.g., granting access for an hour) are done as before. The hierarchical structure for time makes it possible to succinctly grant permission for an entire day, month, or year. For example, one may grant access for all of 2017 by filling in slot $\ell_2$ with $H(\texttt{"2017"})$ and leaving the final $\ell_2 - 1$ slots, which correspond to month, day, and year, free. Therefore, to grant permission over a time range, \emph{the number of keys granted is logarithmic in the length of the time range}. For example, to delegate access to a URI from October 29, 2014 at 10 PM until December 2, 2014 at 1 AM, the following keys need to be generated: \path{2014/Oct/29/23}, \path{2014/Oct/29/24}, \path{2014/Oct/30/*}, \path{2014/Oct/31/*}, \path{2014/Nov/*}, \path{2014/Dec/01/*}, and \path{2014/Dec/02/01}.
The tree can be chosen differently to support longer time ranges (e.g., additional level representing decades), change the granularity of expiry (e.g., minutes instead of hours), trade off encryption time for key size (e.g., deeper/shallower tree), or use a more regular structure (e.g., binary encoding with logarithmic split). For example, our implementation uses a depth-6 tree (instead of depth-4), to be able to delegate time ranges with fewer keys.

In summary, to produce a key set for delegation, first determine which subtrees in the time hierarchy represent the time range. For each one, produce a separate pattern, and encode the time into the last $\ell_2$ slots. Encode the URI prefix in the first $\ell_1$ slots of each pattern. Finally, generate the keys corresponding to those patterns, using keys in the key store.

\subsection{Optimizations for Low-Power Devices}\label{s:encryption_lowpower}

On low-power embedded devices, performing a single \enc{} encryption consumes a significant amount of energy. Therefore, we design \newprot{} with optimizations to \enc{}.

\subsubsection{Hybrid Encryption and Key Reuse}\label{ssec:hybrid}

\newprot{} uses \enc{} in a hybrid encryption scheme. To encrypt a message $m$ in \newprot{}, one samples a symmetric key $k$, and encrypts $k$ with \newprot{} to produce ciphertext $c_1$. The pattern used for \enc{} encryption is chosen as in \secref{ssec:uri_time} to encode the \emph{rendezvous point}. Then, one encrypts $m$ using $k$ to produce ciphertext $c_2$. The \newprot{} ciphertext is $(c_1, c_2)$.

For subsequent messages, one reuses $k$ and $c_1$; the new message is encrypted with $k$ to produce a new $c_2$. One can keep reusing $k$ and $c_1$ until the \enc{} pattern for encryption changes, which happens at the end of each hour (or other interval used for expiry). At this time, \newprot{} performs \emph{key rotation} by choosing a new $k$, encrypting it with \enc{} using the new pattern, and then proceeding as before. Therefore, \emph{most messages only incur cheap symmetric-key encryption}.

This also reduces the load on subscribers. The \newprot{} ciphertexts sent by a publisher during a single hour will all share the same $c_1$. Therefore, the subscriber can decrypt $c_1$ once for the first message to obtain $k$, and \emph{cache} the mapping from $c_1$ to $k$ to avoid expensive \enc{} decryptions for future messages sent during that hour.

Thus, expensive \enc{} operations are only performed upon key rotation, which happens \emph{rarely}---once an hour (or other granularity chosen for expiry) for each resource.

\subsubsection{Precomputation with Adjustment}\label{s:precomputation}

Even with hybrid encryption and key reuse to perform \enc{} encryption rarely, \enc{} contributes significantly to the overall power consumption on low-power devices.
Therefore, this section explores how to perform individual \enc{} encryptions more efficiently.

Most of the work to encrypt under a pattern $S$ is in computing the quantity $Q_S = g_3 \cdot \prod_{(i, a_i) \in \fixed(S)} h_i^{a_i}$, where $g_3$ and the $h_i$ are part of the \enc{} public parameters.
One may consider computing $Q_S$ once, and then reusing its value when computing future encryptions under the same pattern $S$. Unfortunately, this alone does not improve efficiency because the pattern $S$ used in one \enc{} encryption is different from the pattern $T$ used for the next encryption.

\newprot{}, however, observes that $S$ and $T$ are similar; they match in the $\ell_1$ slots corresponding to the URI, and the remaining $\ell_2$ slots will correspond to adjacent leaves in the time tree. \newprot{} takes advantage of this by efficiently \emph{adjusting} the precomputed value $Q_S$ to compute $Q_T$ as follows:
\begin{equation*}
\vspace{-3mm}
Q_T = Q_S
\cdot
\prod_{\substack{(i, b_i) \in \fixed(T) \\ i \in \free(S)}} h_i^{b_i}
\cdot
\prod_{\substack{(i, a_i) \in \fixed(S) \\ i \in \free(T)}} h_i^{-a_i}
\cdot
\prod_{\substack{(i, a_i) \in \fixed(S) \\ (i, b_i) \in \fixed(T) \\ a_i \neq b_i}} h_i^{b_i - a_i}
\vspace{-1mm}
\end{equation*}
This requires one $\mathbb{G}_1$ exponentiation per differing slot between $S$ and $T$ (i.e., the Hamming distance).
Because $S$ and $T$ usually differ in only the final slot of the time hierarchy, this will usually require one $\mathbb{G}_1$ exponentiation total, substantially faster than computing $Q_T$ from scratch. Additional exponentiations are needed at the end of each day, month, and year, but they can be eliminated by maintaining additional precomputed values corresponding to the start of the current day, current month, and current year.

The protocol remains secure because a ciphertext is distributed identically whether it was computed from a precomputed value $Q_S$ or via regular
\iffull
encryption as in \appref{app:enc}.
\else
encryption.
\fi

\subsection{Extensions}\label{ssec:extensions}
Via simple extensions, JEDI can support (1) wildcards in the \emph{middle} of a URI or time, and (2) forward secrecy. We describe these extensions in
\iffull
\appref{sec:extensions}.
\else
the appendix of our extended paper.
\fi

\iffull
\subsection{Security Guarantee (Proof in \appref{app:proofs})}
\else
\subsection{Security Guarantee}
\fi
\label{s:encryption_guarantee}

We formalize the security of JEDI's encryption below.
\begin{theorem}
\label{thm:newprot_game}
Suppose \newprot{} is instantiated with a Selective-ID CPA-secure~\cite{boneh2004efficient, abdalla2007generalized}, history-independent
\iffull
(\appref{app:proofs})
\else
(defined in our extended paper~\cite{fullpaper})
\fi
\enc{} scheme. Then, no probabilistic polynomial-time adversary $\mathcal{A}$ can win the following security game against a challenger $\mathcal{C}$ with non-negligible advantage:\\
\textbf{Initialization.} $\mathcal{A}$ selects a (URI, time) pair to attack.\\
\textbf{Setup.} $\mathcal{C}$ gives $\mathcal{A}$ the public parameters of the \newprot{} instance.\\
\textbf{Phase 1.} $\mathcal{A}$ can make three types of queries to $\mathcal{C}$:
\begin{enumerate}[topsep=0pt, noitemsep, wide, labelwidth=!, labelindent=0pt]
    \item $\mathcal{A}$ asks $\mathcal{C}$ to create \anent{}; $\mathcal{C}$ returns a name in $\{0, 1\}^*$, which $\mathcal{A}$ can use to refer to that \ent{} in future queries. A special name exists for the authority.
    \item $\mathcal{A}$ asks $\mathcal{C}$ for the key set of any \ent{}; $\mathcal{C}$ gives $\mathcal{A}$ the keys that the \ent{} has. At the time this query is made, the requested key may \textbf{not} contain a key whose URI and time are both prefixes of the challenge (URI, time) pair.
    \item $\mathcal{A}$ asks $\mathcal{C}$ to make any \ent{} delegate a key set of $\mathcal{A}$'s choice to another \ent{} (specified by names in $\{0, 1\}^*$).
\end{enumerate}
\textbf{Challenge.} When $\mathcal{A}$ chooses to end Phase 1, it sends $\mathcal{C}$ two messages, $m_0$ and $m_1$, of the same length. Then $\mathcal{C}$ chooses a random bit $b \in \{0, 1\}$, encrypts $m_b$ under the challenge (URI, time) pair, and gives $\mathcal{A}$ the ciphertext.\\
\textbf{Phase 2.} $\mathcal{A}$ can make additional queries as in Phase 1.\\
\textbf{Guess.} $\mathcal{A}$ outputs $b' \in \{0, 1\}$, and wins the game if $b = b'$.
The advantage of an adversary $\mathcal{A}$ is $\left| \Pr[\mathcal{A} \text{ wins}] - \frac{1}{2} \right|$.
\end{theorem}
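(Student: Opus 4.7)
The plan is to reduce security to the Selective-ID CPA security of the underlying \enc{} scheme via a black-box reduction. Suppose some PPT adversary $\adv$ wins the \newprot{} game with non-negligible advantage; I construct a PPT adversary $\mathcal{B}$ that breaks Selective-ID CPA security with comparable advantage by running $\adv$ internally while acting as the \newprot{} challenger $\mathcal{C}$.

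During Initialization, when $\adv$ announces its challenge (URI, time) pair, $\mathcal{B}$ applies the encoding of \secref{ssec:uri_time} to obtain a \enc{} pattern $P^*$ and commits $P^*$ as its own Selective-ID target. At Setup, $\mathcal{B}$ forwards the \enc{} public parameters received from its own challenger as the \newprot{} public parameters to $\adv$. Internally, $\mathcal{B}$ maintains a table mapping each principal name to the set of \enc{} patterns that principal is entitled to hold keys for; the authority starts with the all-wildcards pattern, and every other principal starts empty.

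During Phase 1, $\mathcal{B}$ handles each query symbolically. Principal creation adds a fresh empty entry to the table. A delegation from $P_1$ to $P_2$ over some URI prefix and time range is translated (via \secref{ssec:key_set}) into a set of patterns; $\mathcal{B}$ verifies that each one is matched by some pattern $P_1$ already has, then appends them to $P_2$'s entry. Crucially, $\mathcal{B}$ does \emph{not} materialize the actual keys at this point. A key-set query for principal $P$ is answered by iterating over $P$'s entitlements and, for each pattern, invoking $\mathcal{B}$'s \enc{} key-derivation oracle to obtain a fresh key. The game's restriction that no requested key may be a (URI, time) prefix of the challenge pair translates directly into the requirement that no pattern queried to the oracle matches $P^*$, so all oracle queries are admissible. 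In the Challenge phase, $\mathcal{B}$ forwards $(m_0, m_1)$ to its own challenger and returns the received ciphertext to $\adv$; Phase 2 is handled as Phase 1; finally $\mathcal{B}$ echoes $\adv$'s guess. A standard KEM-DEM hybrid step is used to lift the argument to the hybrid encryption of \secref{ssec:hybrid}, adding only a negligible loss from the symmetric cipher's one-time CPA security.

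The main obstacle is ensuring that this symbolic simulation is indistinguishable from the real game. In the real game, a principal's keys are produced by a concrete chain of $\mathbf{KeyDer}$ applications starting from the master key, with fresh randomness consumed at every link; in the simulation, $\mathcal{B}$ queries its key oracle directly on the final pattern, skipping over the delegation chain entirely. For $\adv$'s view to be identical in both worlds, keys derived via any such chain must have the same distribution as keys derived directly from the master key for the same pattern. This is exactly what \emph{history-independence} of \enc{} asserts, and is why the theorem takes it as a hypothesis on the underlying scheme. Once history-independence is granted, every query in Phases 1 and 2 is perfectly simulated, the challenge ciphertext is distributed identically in both games, and $\mathcal{B}$'s advantage equals that of $\adv$, contradicting Selective-ID CPA security of \enc{}.
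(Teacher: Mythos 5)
Your reduction matches the paper's proof essentially exactly: both commit the encoded challenge pattern selectively, track delegations purely symbolically as pattern sets, materialize \enc{} keys lazily via the key-derivation oracle only when a key-set query arrives (so that the restriction in query type 2 guarantees no oracle query matches the target pattern), and invoke history-independence to argue that directly derived keys are distributed identically to keys produced by the real delegation chains. The only differences are cosmetic: the paper additionally caches materialized keys per principal so that repeated key-set queries return consistent answers (you should do the same rather than drawing a ``fresh key'' each time), and it does not bother with your extra KEM-DEM hybrid step since the theorem's challenge is stated directly as a \enc{} encryption of $m_b$.
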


We prove this theorem  in
\iffull
Appendix~\ref{app:proofs}.
\else
our extended paper~\cite{fullpaper}.
\fi
Although we only achieve selective security in the standard model (like much prior work~\cite{boneh2005hierarchical, abdalla2007generalized}), one can achieve adaptive security if the hash function $H$ in \secref{ssec:key_set} is modeled as a random oracle~\cite{abdalla2007generalized}.
It is sufficient for \newprot{} to use a CPA-secure (rather than CCA-secure) encryption scheme because \newprot{} messages are signed, as detailed below in \secref{sec:integrity}.

\section{Integrity in \newprot{}}\label{sec:integrity}

To prevent an attacker from flooding the system with messages, spoofing fake data, or actuating devices without permission, \newprot{} must ensure that \anent{} can only send a message on a URI if it has permission. For example, an application subscribed to \path{buildingA/floor2/roomLHall/sensor0/temp} should be able to verify that the readings it is receiving are produced by \texttt{sensor0}, not an attacker. In addition to subscribers, an intermediate party (e.g., the router in a publish-subscribe system) may use this mechanism to filter out malicious traffic, without being trusted to read messages.

\subsection{Starting Solution: Signature Chains}\label{ssec:signature_chains}
A standard solution in the existing literature, used by SPKI/SDSI~\cite{clarke2001certificate}, Vanadium~\cite{taly2016distributed}, and bw2~\cite{andersen2017democratizing}, is to include a certificate chain with each message. Just as permission to subscribe to a resource is granted via a chain of delegations in \secref{sec:encryption}, permission to publish to a resource is also granted via a chain of delegations.
Whereas \secref{sec:encryption} includes \enc{} keys in each delegation, these integrity solutions delegate signed certificates. To send a message, \anent{} encrypts it (\secref{sec:encryption}), signs the ciphertext, and includes a certificate chain that proves that the signing keypair is authorized for that URI and time.

\subsection{Anonymous Signatures}\label{ssec:anon_sig}
The above solution reveals the sender's identity (via its public key) and the particular chain of delegations that gives the sender access. For some applications this is acceptable, and its auditability may even be seen as a benefit.
For other applications, the sender must be able to send a message anonymously. See \secref{s:intro_integrity} for an example.
How can we reconcile \emph{access control} (ensuring the sender has permission) and \emph{anonymity} (hiding who the sender is)?

\subsubsection{Starting Point: \enc{} Signatures}\label{s:starting_wkdibe_signatures}

Our solution is to use a signature scheme based on \enc{}. Abdalla et al.~\cite{abdalla2007generalized} observe that \enc{} can be extended to a signature scheme in the same vein as has been done for IBE~\cite{boneh2001identity} and HIBE~\cite{gentry2002hierarchical}. To sign a message $m\in\mathbb{Z}_p^*$ with a key for pattern $S$, one uses $\mathbf{KeyDer}$ to fill in a slot with $m$, and presents the decryption key as a signature.

This is our starting point for designing anonymous signatures in \newprot{}. A message can be signed by first hashing it to $\mathbb{Z}_p^*$ and signing the hash as above. Just as consumers receive decryption keys via a chain of delegations (\secref{sec:encryption}), publishers of data receive these signing keys via chains of delegations.

\subsubsection{Anonymous Signatures in \newprot{}}\label{s:opt_anon_sig}

The construction in \secref{s:starting_wkdibe_signatures} has two shortcomings. First, signatures are \emph{large}, linear in the number of fixed slots of the pattern. Second, it is unclear if they are truly \emph{anonymous}.

\parhead{Signature size}
As explained in
\iffull
\secref{sec:encryption} and \appref{app:enc},
\else
\secref{sec:encryption},
\fi
we use a construction of \enc{} based on BBG HIBE~\cite{boneh2005hierarchical}. BBG HIBE supports a property called \emph{limited delegation} in which a secret key can be reduced in size, in exchange for limiting the depth in the hierarchy at which subkeys can be generated from it. We observe that the \enc{} construction also supports this feature. Because we need not support $\mathbf{KeyDer}$ for the decryption key acting as a signature, we use limited delegation to compress the signature to just two group elements.

\parhead{Anonymity}
The technique in \secref{s:starting_wkdibe_signatures} transforms an encryption scheme into a signature scheme, but the resulting signature scheme is not necessarily anonymous.
For the particular construction of \enc{} that we use, however, we prove that the resulting signature scheme is indeed anonymous.
Our insight is that, for this construction of \enc{}, keys are \emph{history-independent} in the following sense: $\mathbf{KeyDer}$, for a fixed $\mathsf{Params}$ and $\mathsf{Pattern_B}$, returns a private key $\mathsf{Key}_\mathsf{Pattern_B}$ with the \emph{exact same distribution} regardless of $\mathsf{Key}_\mathsf{Pattern_A}$ (see \secref{ssec:enc} for notation).
Because signatures, as described in \secref{s:starting_wkdibe_signatures}, are private keys generated with $\mathbf{KeyDer}$, they are also history-independent; a signature for a pattern has the same distribution regardless of the key used to generate it. This is precisely the anonymity property we desire.

\subsection{Optimizations for Low-Power Devices}\label{s:signature_lowpower}

As in \secref{ssec:hybrid}, we must avoid computing a \enc{} signature for every message. A simple way to do this is to sample a digital signature keypair each hour, sign the verifying key with \enc{} at the beginning of the hour, and sign messages during the hour with the corresponding signing key.

Unfortunately, this may still be too expensive for low-power embedded devices because it requires a digital signature, which requires asymmetric-key cryptography, for \emph{every} message. We can circumvent this by instead (1) choosing a \emph{symmetric} key $k$ every hour, (2) signing $k$ at the start of each hour (using \enc{} for anonymity), and (3) using $k$ in an \emph{authenticated broadcast protocol} to authenticate messages sent during the hour. An authenticated broadcast protocol, like \uTESLA{}~\cite{perrig2001spins}, generates a MAC for each message using a key whose hash is the previous key; thus, the single signed key $k$ allows the recipient to verify later messages, whose MACs are generated with hash preimages of $k$. In general, this design requires stricter time synchronization than the one based on digital signatures, as the key used to generate the MAC depends on the time at which it is sent. However, for the sense-and-send use case typical of smart buildings, sensors anyway publish messages on a fixed schedule (e.g., one sample every $x$ seconds), allowing the key to depend only on the message index. Thus, timely message delivery is the only requirement. Our scheme differs from \uTESLA{} because the first key (end of the hash chain) is signed using \enc{}.

Additionally, we use a technique similar to precomputation with adjustment (\secref{s:precomputation}) for anonymous signatures. Conceptually, $\mathbf{KeyDer}$, which is used to produce signatures, can be understood as a two-step procedure: (1) produce a key of the correct form and structure (called $\mathbf{NonDelegableKeyDer}$), and (2) re-randomize the key so that it can be safely delegated (called $\mathbf{ResampleKey}$). Re-randomization can be accelerated using the same precomputed value $Q_S$ that \newprot{} uses for encryption (\secref{s:precomputation}), which can be efficiently adjusted from one pattern to the next. The result of $\mathbf{NonDelegableKeyDer}$ can also be adjusted to obtain the corresponding result for a similar pattern more efficiently.
We fully explain our adjustment technique for signatures in
\iffull
Appendix \ref{as:precomputation}.
\else
our extended paper~\cite{fullpaper}.
\fi

Finally, \enc{} signatures as originally proposed (\secref{s:starting_wkdibe_signatures}) are verified by encrypting a random message under the pattern corresponding to the signature, and then attempting to decrypt it using the key acting as a signature. We provide a more efficient signature verification algorithm for this construction of \enc{} in
\iffull
\appref{app:enc}.
\else
our extended paper~\cite{fullpaper}.
\fi

\iffull
\subsection{Security Guarantee (Proof in \appref{app:proofs})}
\else
\subsection{Security Guarantee}
\fi
The integrity guarantees of the method in this section can be formalized using a game very similar to the one in Theorem \ref{thm:newprot_game}, so we do not present it here for brevity.
We do, however, formalize the anonymous aspect of \enc{} signatures:
\vspace{-1ex}
\begin{theorem}
\label{thm:anonymity}
For any well-formed keys $k_1$, $k_2$ corresponding to the same (URI, time) pair in the same resource hierarchy, and any message $m \in \mathbb{Z}_p^*$, the distribution of signatures over $m$ produced using $k_1$ is information-theoretically indistinguishable from (i.e., equal to) the distribution of signatures over $m$ produced using $k_2$.
\end{theorem}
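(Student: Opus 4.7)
The plan is to reduce the theorem to the \emph{history-independence} property of the underlying \enc{} scheme that was asserted in \secref{s:opt_anon_sig}. Unpacking definitions, a signature on $m$ under a key $k_i$ for the (URI, time) pair is produced by invoking $\mathbf{KeyDer}(\mathsf{Params}, k_i, P_m)$, where $P_m$ is the pattern obtained by taking the pattern $P$ associated with the (URI, time) pair and filling in the designated ``signature slot'' with the value $m$. Because $k_1$ and $k_2$ are well-formed keys for the same (URI, time) pair, the pattern $P$ matches $P_m$ in both cases, so each call to $\mathbf{KeyDer}$ is legal. The claim is therefore that the output distribution of $\mathbf{KeyDer}(\mathsf{Params},\cdot, P_m)$ does not depend on its second argument, as long as that argument is a well-formed key whose pattern matches $P_m$.

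First I would state the history-independence property precisely as a lemma on the \enc{} construction used in the paper: for any $\mathsf{Params}$, any target pattern $P_B$, and any two well-formed keys $k_A, k_A'$ whose patterns both match $P_B$, the random variables $\mathbf{KeyDer}(\mathsf{Params}, k_A, P_B)$ and $\mathbf{KeyDer}(\mathsf{Params}, k_A', P_B)$ are identically distributed. I would prove this lemma by inspecting the BBG-style \enc{} construction from \secref{ssec:enc}: a key consists of a ``main'' group element of the form $g_2^{\alpha} \cdot Q_S^{r}$, a ``blinding'' element $g^{r}$, and one element per free slot of the form $h_j^{r}$, where $r$ is the accumulated randomness used across all derivation steps. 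The key observation is that $\mathbf{KeyDer}$ samples a fresh exponent $r'$ and multiplies it in, so the resulting key has effective randomness $r + r'$; because $r'$ is uniform over $\mathbb{Z}_p$, the sum $r + r'$ is uniform regardless of $r$. Thus the output distribution depends only on $\mathsf{Params}$ and $P_B$, not on the input key.

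Given the lemma, the theorem follows immediately: the signature on $m$ produced from $k_i$ is exactly $\mathbf{KeyDer}(\mathsf{Params}, k_i, P_m)$ (optionally post-processed by $\mathbf{ResampleKey}$, which only adds additional fresh randomness and therefore preserves history-independence), so for $i \in \{1,2\}$ the two distributions coincide. I would also note that the limited-delegation compression used for signatures in \secref{s:opt_anon_sig} only drops the per-free-slot components, which is a deterministic projection of the full key; projecting identically distributed random variables through the same function yields identically distributed outputs, so the argument carries over to compressed signatures.

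The main obstacle is verifying the history-independence lemma for the specific \enc{} construction, in particular checking that the re-randomization step in $\mathbf{KeyDer}$ truly overwrites any distinguishing information carried in $k_A$ versus $k_A'$ (including across multiple prior derivation steps). Once this is nailed down---by writing out the algebraic form of a well-formed key and observing that every component is a deterministic function of $\mathsf{Params}$, the key's pattern, and a single uniform exponent that gets refreshed at each $\mathbf{KeyDer}$ call---the rest of the proof is essentially bookkeeping, and the ``information-theoretic'' (as opposed to merely computational) indistinguishability claim of Theorem~\ref{thm:anonymity} falls out for free.
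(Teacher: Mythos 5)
Your proposal is correct and takes essentially the same route as the paper: the paper likewise proves a history-independence lemma for $\mathbf{KeyDer}$ by observing that the fresh exponent $t$ added at each derivation makes the accumulated randomness $r_0+t$ uniform over $\mathbb{Z}_p$, and then notes that signatures inherit this property since they are derived keys with the signature slot filled in (the paper additionally writes out the analogous direct computation for the $\mathbf{Sign}$ formula for completeness, and handles the two-component compression exactly as you do). No gaps.
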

\vspace{-1ex}
This implies that even a powerful adversary who observes the private keys held by all \ents{} cannot distinguish signatures produced by different \ents{}, for a fixed message and pattern.
No computational assumptions are required.
We prove Theorem \ref{thm:anonymity} in
\iffull
\appref{app:proofs}.
\else
the appendix of our extended paper~\cite{fullpaper}.
\fi

\section{Revocation in \newprot{}}\label{sec:revocation}

This section explains how \newprot{} keys may be revoked.

\subsection{Simple Solution: Revocation via Expiry}\label{ssec:expiry_revocation}

A simple solution for revocation is to rely on expiration. In this solution, all keys are time-limited, and delegations are periodically refreshed, according to a higher layer protocol, by granting a new key with a later expiry time. In this setup, the \ent{} who granted a key can easily revoke it by not refreshing that delegation when the key expires. We expect this solution to be sufficient for many applications of \newprot{}.

\subsection{Immediate Revocation}

Some disadvantages of the solution in \secref{ssec:expiry_revocation} are that (1) \ents{} must periodically come online to refresh delegations, and (2) revocation only takes effect when the delegated key expires. We would like a solution without these disadvantages.

However, any revocation scheme that does not wait for keys to expire is subject to set of {\em inherent} limitations. The recipient of the revoked delegation still has the revoked decryption key, so it can still decrypt messages encrypted in the same way. This means that we must either (1) rely on intermediate parties to modify ciphertexts so that revoked keys cannot decrypt them, or (2) require senders to be aware of the revocation, and encrypt messages in a different way so that revoked keys cannot decrypt them. Neither solution is ideal: (1) makes assumptions about how messages are delivered, which we have avoided thus far (\secref{sec:security}), and requires trust in an intermediary to modify ciphertexts, and (2) weakens the decoupling of senders and receivers (\secref{ssec:challenges}). We adopt the second compromise: while senders will not need to know who are the receivers, they will need to know who has been revoked.

\subsection{Immediate Revocation in \newprot{}}
\label{ssec:immediate_revocation}

We extend tree-based broadcast encryption~\cite{naor2001revocation, dodis2002public} to support decentralized delegation of decryption keys, and incorporate it into \newprot{}. We use tree-based broadcast encryption because it only requires senders to know about \emph{revoked} users when encrypting messages, as opposed to \emph{all} users in the system (as is required by other broadcast encryption schemes).

\subsubsection{Tree-based Broadcast Encryption}

Existing work~\cite{naor2001revocation, dodis2002public} proposes two methods of tree-based broadcast encryption: Complete Subtree (CS) and Subset Difference (SD). We focus on the CS method here.

The CS method is based on a binary tree (\figref{fig:CS}) where each node corresponds to a separate keypair. Each user corresponds to a leaf of the tree and has the secret keys for all nodes on the root-to-leaf path. To encrypt a message that is decryptable by a subset of users, one finds a collection of subtrees that include all leaves except those corresponding to revoked users and encrypts the message multiple times using the public keys corresponding to the root of each subtree. By associating each node with an ID and encrypting with IBE, one can avoid generating a separate keypair for each node.

\subsubsection{Modifying Broadcast Encryption for Delegation}\label{s:delegable_cs}

Users in broadcast encryption do not map one-to-one to users in \newprot{}.
To avoid confusion, we refer to ``users'' in broadcast encryption as ``\leaves{}'' (abbreviated $\mathsf{\lf{}}$).

We modify the CS method to support delegation, as follows. Each key corresponds to a range of consecutive \leaves{}. When a user qualifies a key to delegate to another \ent{}, she produces a new key corresponding to a subrange of the \leaves{} of the original key. When a key is revoked, publishers are informed of the range of \leaves{} corresponding to the revoked key. Then, they encrypt new messages using the CS method, choosing subtrees that cover all leaves except those corresponding to revoked \leaves{}. If a key is revoked, that key and all keys derived from it can no longer decrypt messages, which is a property that we want.
Thus, if Alice has $k$ \leaves{}, she must store secret keys for $O(k+\log n)$ nodes, where $n$ is the total number of \leaves{} (so the depth of the tree is $\log n$).

In JEDI,  we reduce this to $O(\log n)$ secret keys by using HIBE. We give each node $v_i$ an identifier $\mathsf{id}(v_i)\in \{0,1\}^*$ that describes the path from the root of the tree to that node. In particular, if $v_j$ is an ancestor of $v_i$, then $\mathsf{id}(v_j)$ is a prefix of $\mathsf{id}(v_i)$. Note that if we use HIBE with these IDs directly, a user with the secret key for the root can generate keys for all nodes in the tree.
To fix this, we use a property called {\em limited delegation}, introduced by prior work~\cite{boneh2005hierarchical}, to generate a HIBE key that is unqualifiable (i.e., cannot be extended). For example, if Alice has \leaves{} $\mathsf{\lf{}_3}$ to $\mathsf{\lf{}_4}$ in \figref{fig:CS}, she stores an unqualifiable key for node $v_1$ and a qualifiable key for node $v_3$.
In general, each user must store $O(\log k)$ qualifiable keys and $O(\log n)$ unqualifiable keys, thus $O(\log k + \log n)$ total.

\begin{figure}[t!]
\centering
\usetikzlibrary{trees}
\scalebox{0.9}{
\begin{tikzpicture}[
    edge from parent fork down,
  every node/.style={fill=blue!30,rounded corners},
  red/.style={fill=red!60,rounded corners},
  green/.style={fill=red!60,rounded corners},
  yellow/.style={fill=green!60,rounded corners},
  edge from parent/.style={black,thick,draw},
  level distance=1cm,
    level 1/.style={sibling distance=4cm},
    level 2/.style={sibling distance=2cm},
    level 3/.style={sibling distance=1cm}]
        \node [green] {$v_1:sk_1$}
        child {
            node [yellow] {$v_2:sk_2$}
            child {
                node {$v_4:sk_4$}
                child {
                    node {$\mathsf{\lf{}_1}$}
                }
            }
            child {
                node {$v_5:sk_5$}
                child {
                    node {$\mathsf{\lf{}_2}$}
                }
            }
        }
        child {
            node [red] {$v_3:sk_3$}
            child {
                node [red] {$v_{6}:sk_{6}$}
                child {
                    node [red] {$\mathsf{\lf{}_3}$}
               }
            }
            child {
                node [red] {$v_{7}:sk_{7}$}
                child {
                    node [red] {$\mathsf{\lf{}_4}$}
               }
            }
        }
    ;
\end{tikzpicture}
}
\caption{Key management of the CS method. Red nodes indicate nodes associated with revoked leaves. The green node is the root of the subtree covering unrevoked leaves.}
\label{fig:CS}
\vspace{-3ex}
\end{figure}
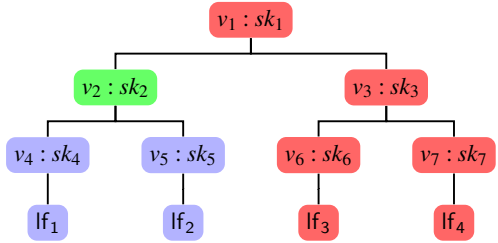

\subsubsection{Using Delegable Broadcast Encryption in \newprot{}}
\label{ssec:using_revocation}
Secret keys in our modified broadcast encryption scheme consist of HIBE keys, so incorporating it into \newprot{} is simple. As discussed in \secref{ssec:concurrent_hierarchies}, \newprot{} uses \enc{} in a way that provides multiple concurrent hierarchies, each in the vein of HIBE. Therefore, we can instantiate a third hierarchy of depth $\ell_3 = \log n$ and use it for revocation.

Let $r$ be the number of revoked keys. The CS method has $O(r \log\frac{n}{r})$-size ciphertexts, so \newprot{} ciphertexts grow to this size when revocation is used. When encrypting a message, senders use the same encryption protocol from \secref{sec:encryption} for the first $\ell_1 + \ell_2$ slots, and repeat the process, filling in the remaining $\ell_3$ slots with the ID of each node used for broadcast encryption. The size of secret keys is $O(\log k + \log n)$ after our modifications to the CS method, so \newprot{} keys grow by this factor, to a total of $O((\log k + \log n) \cdot \log T)$ \enc{} keys, where $T$ is the length of the time range for expiry.

The construction in this section works to revoke decryption keys, but cannot be used with anonymous signatures (\secref{ssec:anon_sig}). Extensions of tree-based broadcast encryption to signatures exist~\cite{libert2012group, libert2012scalable}, and we expect them to be useful to develop a construction for anonymous signatures.

How can \newprot{} inform publishers which \leaves{} are revoked? One simple option is to have a global revocation list, which \ents{} can append to. However, storing this information in a single list becomes a central point of attack, which we have avoided in our system thus far (\secref{sec:security}). To avoid this, one can store the revocation list in a global-scale blockchain, such as Bitcoin or Ethereum, which would require an adversary to be exceptionally powerful to mount a successful attack.
When revoking a set of \leaves{}, \anent{} uses those keys to sign a predetermined object (as in \secref{ssec:anon_sig}), proving it owns an ancestor of that key in the hierarchy. To keep the revocation list private, one can use \newprot{}'s encryption to ensure that only \ents{} with permission to publish to a particular resource can see which keys are revoked for that resource (since publishers too have signing keys, as described in \secref{sec:integrity}).

\iffull
\subsection{Security Guarantee (Proof in \appref{app:proofs})}
\else
\subsection{Security Guarantee}
\fi

The security guarantee for immediate revocation can be stated as a modification to the game in Theorem \ref{thm:newprot_game}. In the Initialization Phase, when $\mathcal{A}$ gives $\mathcal{C}$ the challenge (URI, time), $\mathcal{A}$ additionally submits a list of revoked \leaves{}. Furthermore, $\mathcal{A}$ may compromise \ents{} in possession of private keys that can decrypt the challenge (URI, time) pair during Phases 1 and 2, as long as all \leaves{} corresponding to those keys are in the revocation list submitted in the Initialization Phase.
We provide a proof in
\iffull
\appref{app:proofs}.
\else
the appendix of the extended paper~\cite{fullpaper}.
\fi

\subsection{Optimizing \newprot{}'s Immediate Revocation}\label{s:optimizing_revocation}

A single \newprot{} ciphertext, with revocation enabled, consists of $O(r \log\frac{n}{r})$ \enc{} ciphertexts. To compute them efficiently, we observe that there is a large overlap in the patterns used in individual \enc{} encryptions, allowing us to use the ``precomputation with adjustment'' strategy from \secref{s:precomputation}.

Even with the above optimization, immediate revocation substantially increases the cost of \newprot{}'s cryptography. To reduce this cost, we make three observations. First, to extend \newprot{}'s hybrid encryption to work with revocation, it is sufficient to additionally rotate keys whenever the revocation list changes, in addition to the end of each hour (as in \secref{ssec:hybrid}). This means that, in the common case where the revocation list does not change in between two messages, efficient symmetric-key encryption can be used. Second, the revocation list used to encrypt a message need only contain revoked \leaves{} for the \emph{particular URI} to which the message is sent. This not only makes the broadcast encryption more efficient (smaller $r$), but also causes the effective revocation list for a stream of data to change even more rarely, allowing \newprot{} to benefit more from hybrid encryption. Third, we can do the same thing as above using the expiry time rather than the URI, allowing us to \emph{cull} the revocation list by removing keys from it once they expire.

The efficiency of hybrid encryption depends on the revocation list changing \emph{rarely}. We believe this is a reasonable assumption; most revocation will be handled by expiry, so immediate revocation is only needed if \anent{} must lose access \emph{unexpectedly}. In the smart buildings use case (\secref{sec:introduction}), for example, a key would need to be revoked if \anent{} unexpectedly transfers to another job.

The SD method for tree-based broadcast encryption can also be extended to support delegation and incorporated into \newprot{}
\iffull
(\appref{sec:sd}),
\else
(described in the appendix of our extended paper~\cite{fullpaper}),
\fi
The SD method has smaller ciphertexts but larger keys.

\section{Implementation}\label{sec:implementation}

We implemented \newprot{} as a library in the Go programming language.
\iffull
We expect that only a few applications will require the anonymous signature protocol in \secref{ssec:anon_sig} or the tree-based revocation protocol in \secref{ssec:immediate_revocation}; most applications can use signature chains (\secref{ssec:signature_chains}) for integrity and expiry for revocation (\secref{ssec:expiry_revocation}). Therefore, our implementation makes anonymous signatures optional and implements revocation separately.
\fi
We expect \newprot{}'s key delegation to be computed on relatively powerful devices, like laptops, smartphones, or Raspberry Pis; less powerful devices (e.g., right half of \figref{fig:iot}) will primarily send and receive messages, rather than generate keys for delegation. Therefore, our focus for low-power platforms was on the ``sense-and-send'' use case~\cite{brunelli2014povomon, dutta2007procrastination, feldmeier2009personalized} typical of indoor environmental sensing, where a device periodically publishes sensor readings to a URI.
Whereas our Go library provides higher-level abstractions, we expect low-power devices to use \newprot{}'s crypto library directly.

\subsection{C/C++ Library for \newprot{}'s Cryptography}\label{s:cryptolib}

As part of \newprot{}, we implemented a cryptography library optimized in assembly for three different architectures typical of IoT platforms (\figref{fig:iot}). It implements \enc{} and \newprot{}'s optimizations and modifications
\iffull
(in \secref{s:encryption_lowpower}, \secref{s:signature_lowpower}, \appref{app:enc}).
\else
(in \secref{s:encryption_lowpower}, \secref{s:signature_lowpower}, and our full paper).
\fi
The construction of \enc{} is based on a bilinear group in which the Bilinear Diffie-Hellman Exponent assumption holds.
\iffull
We originally planned to use Barreto-Naehrig elliptic curves~\cite{kawahara2016barreto, cheon2006security}
to implement \enc{}. Unfortunately, a recent attack on Barreto-Naehrig curves~\cite{kim2016extended}
reduced their estimated security level from 128 bits to at most 100 bits~\cite{barbulescu2017updating}.
Therefore, we
\else
We
\fi
use the recent BLS12-381 elliptic curve~\cite{bowe2018bls12}.

State-of-the-art cryptography libraries implement BLS12-381, but none of them, to our knowledge, optimize for microarchitectures typical of low-power embedded platforms. To improve energy consumption, we implemented BLS12-381 in C/C++, profiled our implementation, and re-wrote performance-critical routines in assembly.
We focus on ARM Cortex-M, an IoT-focused family of 32-bit microprocessors typical of contemporary low-power embedded sensor platforms~\cite{hamiltoniot, campbell2017hail, imix}.
Cortex-M processors have been used in billions of devices, including commercial IoT offerings such as Fitbit and Nest Protect. Our assembly targets Cortex-M0+, which is among the least powerful of processors in the Cortex-M series, and of those used in IoT devices (farthest to the right in \figref{fig:iot}). By demonstrating the practicality of \newprot{} on Cortex-M0+, we establish that \newprot{} is viable across the spectrum of IoT devices (\figref{fig:iot}).

The main challenge in targeting Cortex-M0+ is that the 32-bit multiply instruction provides only the lower 32 bits of the product. Even on more powerful microarchitectures without this limitation (e.g., Intel Core i7), most CPU time ($\geq 80\%$) is spent on multiply-intensive operations (e.g., BigInt multiplication and Montgomery reduction), so the lack of such an instruction was a performance bottleneck.
As a workaround, our assembly code emulates multiply-accumulate with carry in 23 instructions. Cortex-M3 and Cortex-M4, which are more commonly used than Cortex-M0+, have instructions for 32-bit multiply-accumulate which produce the entire 64-bit result; we expect \newprot{} to be more efficient on those processors.

We also wrote assembly to optimize BLS12-381 for x86-64 and ARM64, representative of server/laptop and smartphone/Raspberry Pi, respectively (first two tiers in \figref{fig:iot}).
Thus, our Go library, which runs on these non-low-power platforms, also benefits from low-level assembly optimizations.

\subsection{Application of \newprot{} to bw2}\label{sssec:bw2}

We used our \newprot{} library to implement end-to-end encryption in bw2, a syndication and authorization system for IoT. bw2's syndication model is based on publish-subscribe, explained in \secref{sec:introduction}. Here we discuss bw2's authorization model.
Access to resources is granted via certificate chains from the authority of a resource hierarchy to \anent{}.
Individual certificates are called Declarations of Trust (DOTs).
bw2 maintains a publicly accessible registry of DOTs, implemented using blockchain smart contracts, so that \ents{} can find the DOTs they need to form DOT chains. A \emph{trusted} router enforces permissions granted by DOTs. \Ents{} must present DOT chains when publishing/subscribing to resources, and the router verifies them.
Note that a compromised router can read messages.

We use \newprot{} to enforce bw2's authorization semantics with end-to-end encryption. DOTs granting permission to subscribe now contain \enc{} keys to decrypt messages. By default, DOTs granting permission to publish to a URI remain unchanged, and are used as in \secref{ssec:signature_chains}. \enc{} keys may also be included in DOTs granting publish permission, for anonymous signatures (\secref{ssec:anon_sig}).
Using our library for \newprot{}, we implemented a wrapper around the bw2 client library. It transparently encrypts and decrypts messages using \enc{}, and includes \enc{} parameters and keys in DOTs and \ents{}, as needed for \newprot{}. bw2 signs each message with a digital signature (first alternative in \secref{s:signature_lowpower}).

The bw2-specific wrapper is less than 900 lines of Go code. Our implementation required no changes to bw2's client library, router, blockchain, or core---it is a separate module. Importantly, it provides the same API as the standard bw2 client library. Thus, it can be used as a drop-in replacement for the standard bw2 client library, to easily add end-to-end encryption to existing bw2 applications with minimal changes.

\section{Evaluation}\label{sec:evaluation}

We evaluate \newprot{} via microbenchmarks, determine its power consumption on a low-power sensor, measure the overhead of applying it to bw2, and compare it to other systems.

\begin{table}[t]
    \centering
    \caption{Latency of \newprot{}'s implementation of BLS12-381}
    \begin{tabular}{|l|c|c|c|}\hline
        \textbf{Operation} & \textbf{Laptop} & \hspace{-1ex}\textbf{Rasp. Pi}\hspace{-1ex} & \textbf{Sensor}\\\hline\hline
        $\mathbb{G}_1$ Mul. (Chosen Scalar) & 109 \us{} & 1.33 ms & 509 ms\\\hline
        $\mathbb{G}_2$ Mul. (Chosen Scalar) & 343 \us{} & 3.86 ms & 1.44 s\\\hline
        $\mathbb{G}_T$ Mul. (Rand. Scalar) & 504 \us{} & 5.47 ms & 1.90 s\\\hline
        $\mathbb{G}_T$ Mul. (Chosen Scalar) & 507 \us{} & 5.48 ms & 2.81 s\\\hline
        Pairing & 1.29 ms & 14.0 ms & 4.99 s\\\hline
    \end{tabular}
    \label{tab:bls12_381}
    \vspace{-4ex}
\end{table}

\subsection{Microbenchmarks}\label{ssec:microbenchmarks}

Benchmarks labeled ``Laptop'' were produced on a Lenovo T470p laptop with an Intel Core i7-7820HQ CPU @ 2.90 GHz. Benchmarks labeled ``Raspberry Pi'' were produced on a Raspberry Pi 3 Model B+ with an ARM Cortex-A53 @ 1.4 GHz. Benchmarks labeled ``Sensor'' were produced on a commercially available ultra low-power environmental sensor platform called ``Hamilton'' with an ARM Cortex-M0+ @ 48 MHz. We describe Hamilton in more detail in \secref{ssec:iot_eval}.

\subsubsection{Performance of BLS12-381 in \newprot{}}

Table \ref{tab:bls12_381} compares the performance of \newprot{}'s BLS12-381 implementation on the three platforms, with our assembly optimizations. As expected from \figref{fig:iot}, the Raspberry Pi performance is an order of magnitude slower than Laptop performance, and performance on the Hamilton sensor is an additional two-to-three orders of magnitude slower.

\subsubsection{Performance of \enc{} in \newprot{}}

\figref{fig:crypto_performance} depicts the performance of \newprot{}'s cryptography primitives. \figref{fig:crypto_performance} does not include the sensor platform; \secref{ssec:iot_eval} thoroughly treats performance of \newprot{} on low-power sensors.

In \figref{sfig:wkdibe}, we used a pattern of length 20 for all operations, which would correspond to, e.g., a URI of length 14 and an Expiry hierarchy of depth 6.
To measure decryption and signing time, we measure the time to decrypt the ciphertext or sign the message, plus the time to generate a decryption key for that pattern or ID. For example, if one receives a message on \path{a/b/c/d/e/f}, but has the key for \path{a/*}, he must generate the key for \path{a/b/c/d/e/f} to decrypt it.

\figref{sfig:wkdibe} demonstrates that the \newprot{} encrypts and signs messages and generates qualified keys for delegation at practical speeds. On a laptop, all \enc{} operations take less than 10 ms with up to 20 attributes. On a Raspberry Pi, they are 10x slower (as expected), but still run at interactive speeds.

\begin{figure}[t]
    \centering
    \begin{subfigure}[p]{0.455\linewidth}
        \begin{tabular}{|l|c|c|}\hline
             & \hspace{-0.5ex}Laptop\hspace{-0.5ex} & \hspace{-0.5ex}Rasp. Pi\hspace{-0.5ex} \\\hline\hline
            \hspace{-0.5ex}Enc.\hspace{-0.5ex} & \hspace{-0.5ex}3.08 ms\hspace{-0.5ex} & \hspace{-0.5ex}37.3 ms\hspace{-0.5ex}\\\hline
            \hspace{-0.5ex}Dec.\hspace{-0.5ex} & \hspace{-0.5ex}3.61 ms\hspace{-0.5ex} & \hspace{-0.5ex}43.9 ms\hspace{-0.5ex}\\\hline
            \hspace{-0.5ex}KeyD.\hspace{-0.5ex} & \hspace{-0.5ex}4.77 ms\hspace{-0.5ex} & \hspace{-0.5ex}58.5 ms\hspace{-0.5ex}\\\hline
            \hspace{-0.5ex}Sign\hspace{-0.5ex} & \hspace{-0.5ex}4.80 ms\hspace{-0.5ex} & \hspace{-0.5ex}61.2 ms\hspace{-0.5ex}\\\hline
            \hspace{-0.5ex}Verify\hspace{-0.5ex} & \hspace{-0.5ex}4.78 ms\hspace{-0.5ex} & \hspace{-0.5ex}56.3 ms\hspace{-0.5ex}\\\hline
        \end{tabular}
        \caption{Latency of $\mathbf{Encrypt}$, $\mathbf{Decrypt}$, $\mathbf{KeyDer}$, $\mathbf{Sign}$, and $\mathbf{Verify}$ with 20 attributes}
        \label{sfig:wkdibe}
    \end{subfigure}
    \begin{subfigure}[p]{0.505\linewidth}
        \includegraphics[width=\linewidth]{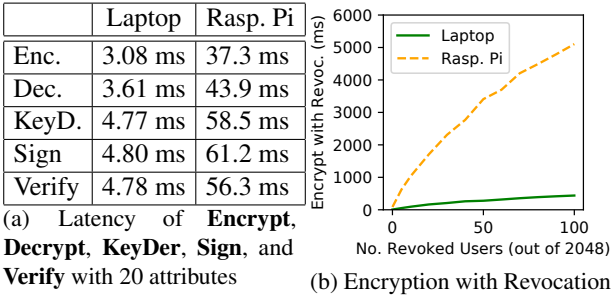}
        \caption{Encryption with Revocation}
        \label{sfig:revocation_encryption}
    \end{subfigure}
    \caption{Performance of \newprot{}'s cryptography}
    \label{fig:crypto_performance}
    \vspace{-3ex}
\end{figure}

\subsubsection{Performance of Immediate Revocation in \newprot{}}

\figref{sfig:revocation_encryption} shows the cost of \newprot{}'s immediate revocation protocol (\secref{sec:revocation}). A private key containing $k$ \leaves{} consists of $O(\log k + \log n)$ \enc{} secret keys where $n$ is the total number of \leaves{}. Therefore, the performance of immediate revocation depends primarily on the number of \leaves{}.

To encrypt a message, one \enc{} encryption is performed for each subtree needed to cover all unrevoked \leaves{}. In general, encryption is $O(r\log\frac{n}{r})$, where $r$ is the number of revoked \leaves{}. Each key contains a set of \emph{consecutive} \leaves{}, so encryption is also $O(R\log\frac{n}{R})$, where $R$ is the number of revoked \newprot{} keys. Decryption time remains almost the same, since only one \enc{} decryption is needed.

To benchmark revocation, we use a complete binary tree of depth 16 ($n = 65536$). The time to generate a new key for delegation is essentially independent of the number of \leaves{} conveyed in that key, because $\log k \ll \log n$. We empirically confirmed this; the time to generate a key for delegation was constant at 2.4 ms on a laptop and 31 ms on a Raspberry Pi as the number of \leaves{} in the key was varied from 5 to 1,000.

To benchmark encryption with revocation, we assume that there exist 2,048 users in the system each with 32 \leaves{}. We measure encryption time with a pattern with 20 fixed slots (for URI and time) as we vary the number of revoked users. \figref{sfig:revocation_encryption} shows that encryption becomes expensive when the revocation list is large ($500$ milliseconds on laptop and $\approx 5$ seconds on Raspberry Pi). However, such an encryption only needs to be performed by a publisher when the URI, time, or revocation list changes; subsequent messages can reuse the underlying symmetric key (\secref{s:optimizing_revocation}). Furthermore, the revocation list includes only revoked keys that match the (URI, time) pair being used, so it is not expected to grow very large.

\begin{figure}[t]
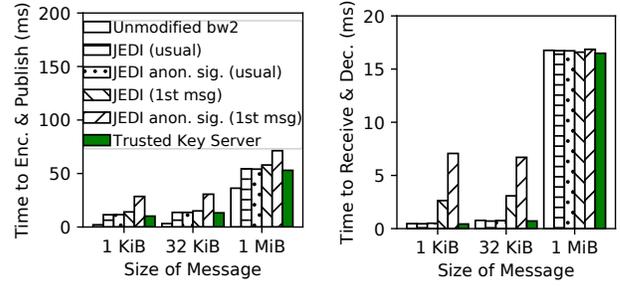

    \centering
    \begin{subfigure}[p]{0.50\linewidth}
        \includegraphics[width=\linewidth]{./fig/newcurve_publish_bw2.pdf}
        \caption{Encrypt/publish message}
        \label{sfig:publish}
    \end{subfigure}
    \begin{subfigure}[p]{0.48\linewidth}
        \includegraphics[width=\linewidth]{./fig/newcurve_subscribe_bw2.pdf}
        \caption{Receive/decrypt message}
        \label{sfig:subscribe}
    \end{subfigure}
    \caption{Critical-path operations in bw2, with/without \newprot{}}
    \label{fig:critical_bw2}
    \vspace{-3ex}
\end{figure}

\subsection{Performance of \newprot{} in bw2}\label{ssec:real}

In bw2, the two critical-path operations are publishing a message to a URI, and receiving a message as part of a subscription. We measure the overhead of \newprot{} for these operations because they are core to bw2's functionality and would be used by any messaging application built on bw2. Our methodology is to perform each operation repeatedly in a loop, to measure the sustained performance (operations/second), and report the average time per operation (inverse).
To minimize the effect of the network, the router was on the same link as the client, and the link capacity was 1 Gbit/s.
In our experiments, we used a URI of length 6 and an Expiry tree of depth 6. We also include measurements from a strawman system with pre-shared AES keys---this represents the critical-path overhead of an approach based on the Trusted Key Server discussed in \secref{sec:security}. Our results are in \figref{fig:critical_bw2}.

We implement the optimizations in \secref{ssec:hybrid}, so only symmetric key encryption/decryption must be performed in the common case (labeled ``usual'' in the diagram).
However, the symmetric keys will \emph{not} be cached for the first message sent every hour, when the \enc{} pattern changes. A \enc{} operation must be performed in this case (labeled ``1st message'' in the diagram).
For large messages, the cost of symmetric key encryption dominates.
\newprot{} has a particularly small overhead for 1 MiB messages in \figref{sfig:subscribe}, perhaps because 1 MiB messages take several milliseconds to transmit over the network, allowing the client to decrypt a message while the router is sending the next message.

We also consider creating DOTs and initiating subscriptions, which are not in the critical path of bw2.
These results are in \figref{fig:occasional_bw2} (note the log scale in \figref{sfig:dot}). Creating DOTs is slower with \newprot{}, because \enc{} keys are generated and included in the DOT.
Initiating a subscription in bw2 requires forming a DOT chain; in \newprot{}, one must also derive a private key from the DOT chain.
\figref{sfig:dot} shows the time to form a short one-hop DOT chain, and in the case of \newprot{}, includes the time to derive the private key.
For \newprot{}'s encryption (\secref{sec:encryption}), these additional costs are incurred only by DOTs that grant permission to subscribe. With anonymous signatures, DOTs granting permission to publish incur this overhead as well, as \enc{} keys must be included.
\figref{sfig:init} puts this in context by measuring the end-to-end latency from initiating a subscription to receiving the first message (measured using bw2's ``query'' functionality).

\begin{figure}[t]
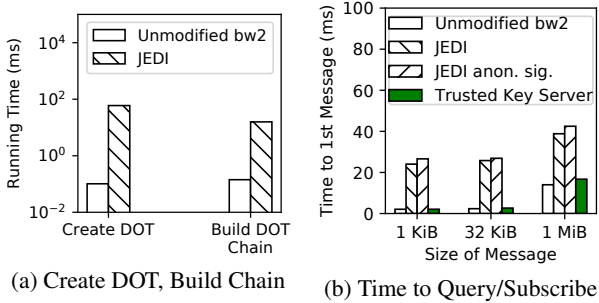

    \centering
    \vspace{-1ex}
    \begin{subfigure}[p]{0.48\linewidth}
        \includegraphics[width=\linewidth]{./fig/newcurve_dot_bw2.pdf}
        \caption{Create DOT, Build Chain}
        \label{sfig:dot}
    \end{subfigure}
    \begin{subfigure}[p]{0.48\linewidth}
        \includegraphics[width=\linewidth]{./fig/newcurve_query_bw2.pdf}
        \caption{Time to Query/Subscribe}
        \label{sfig:init}
    \end{subfigure}
    \caption{Occasional bw2 operations, with and without \newprot{}}
    \label{fig:occasional_bw2}
    \vspace{-3ex}
\end{figure}

For a DOT to be usable, it must be inserted into bw2's registry. This requires a blockchain transaction (not included in \figref{fig:occasional_bw2}). An important consideration in this regard is \emph{size}.
In the unmodified bw2 system, a DOT that grants permission on \path{a/b/c/d/e/f} is 198 bytes. With \newprot{}, each DOT also contains multiple \enc{} keys, according to the time range. In the ``worst case,'' where the start time of a DOT is Jan 01 at 01:00, and the end time is Dec 31 at 22:59, a total of 45 keys are needed. Each key is approximately
\iffull
1 KiB (Table \ref{tab:size}),
\else
1 KiB,
\fi
so the size of this DOT is approximately 45 KiB.

Because bw2's registry of DOTs is implemented using blockchain smart contracts, the bandwidth for inserting DOTs is limited. Using \newprot{} would increase the size of DOTs as above, resulting in an approximately 100-400x decrease in aggregate bandwidth for creating DOTs. However, this can be mitigated by changing bw2 to not store DOTs directly in the blockchain. DOTs can be stored in untrusted storage, with only their hashes stored in the blockchain-based registry.
Such a solution could be based on Swarm~\cite{tron2016smash} or Filecoin~\cite{filecoin}.

\subsection{Feasibility on Ultra Low-Power Devices}\label{ssec:iot_eval}

We use a commercially available sensor platform called ``Hamilton''~\cite{hamiltoniot, andersen2017hamilton} built around the Atmel SAMR21 system-on-chip (SoC). The SAMR21 costs approximately \$2.25 per unit~\cite{samr21e18a} and integrates a low-power microcontroller and radio.
The sensor platform we used in this study costs \$18 to manufacture~\cite{kim2018system}. For battery lifetime calculations, we assume that the platform is powered using a CR123A Lithium battery that provides 1400 mAh at 3.0 V (252 J of energy). Such a battery costs \$1.
The SAMR21 is heavily constrained: it has only a 48 MHz CPU frequency based on the ARM Cortex-M0+ microarchitecture, and a total of only 32 KiB of data memory (RAM). Our goal is to validate that \newprot{} is practical for an ultra low-power sensor platform like Hamilton, in the context of a ``sense-and-send'' application in a smart building. Since most of the platform's cost (\$18) comes from the on-board transducers and assembly, rather than the SAMR21 SoC, \emph{using an even more resource-constrained SoC would not significantly decrease the platform's cost.} An analogous argument applies to energy consumption, as the transducers account for more than half of Hamilton's idle current~\cite{kim2018system}.

Hamilton/SAMR21 is on the lower end of platforms typically used for sense-and-send applications in buildings. Some older studies~\cite{feldmeier2009personalized, li2014energy} use even more constrained hardware like the TelosB; this is because those studies were constrained by hardware available at the time. Modern 32-bit SoCs, like the SAMR21, offer substantially better performance at a similar price/power point to those older platforms~\cite{kim2018system}.

\subsubsection{CPU Usage}

Table \ref{tab:cpu_power} shows the time for encryption and anonymous signing in \newprot{} on Hamilton. The results use the optimizations discussed in \secref{s:encryption_lowpower} and \secref{s:signature_lowpower}, and include the time to ``adjust'' precomputed state. They indicate that symmetric keys can be encrypted and anonymously signed in less than 10 seconds. This is feasible given that encryption and anonymous signing occur rarely, once an hour, and need not be produced at interactive speeds in the normal ``sense-and-send'' use case.

\begin{table}[t]
    \centering
    \caption{CPU and power costs on the Hamilton platform}
    \begin{tabular}{|l|c|c|} \hline
        \textbf{Operation} & \textbf{Time} & \hspace{-1ex}\textbf{Average Current}\hspace{-1ex} \\ \hline\hline
        Sleep (Idle) & N/A & 0.0063 mA \\ \hline
        \enc{} Encrypt & 6.50 s & 10.2 mA \\ \hline
        \enc{} Encrypt and Sign & 9.89 s & 10.2 mA \\ \hline
    \end{tabular}
    \label{tab:cpu_power}
    \vspace{-2ex}
\end{table}

\subsubsection{Power Consumption}

To calculate the impact on battery lifetime, we consider a ``sense-and-send'' application, in which the Hamilton device obtains readings from its sensors at regular intervals, and immediately sends the readings encrypted over the wireless network. We measured the average current consumed for varying sample intervals, when each message is encrypted with AES-CCM, without using JEDI (``AES Only'' in Table \ref{tab:lifetime}). We estimate \newprot{}'s average current based on the current, duration, and frequency (once per hour, for these estimates) of \newprot{} operations, and add it to the average current of the ``AES Only'' setup. Our estimates assume that the \uTESLA{}-based technique in \secref{s:signature_lowpower} is used to avoid attaching a digital signature to each message. We divide the battery's energy capacity by the result to compute lifetime. As shown in Table \ref{tab:lifetime}, \newprot{} decreases battery life by about 40-60\%. Battery life is several years even with \newprot{}, acceptable for IoT sensor platforms.

\newprot{}'s overhead depends primarily on the granularity of expiry times (one hour, for these estimates), \emph{not} the sample interval. To improve power consumption, one could use a time tree with larger leaves, allowing \ents{} to perform \enc{} encryptions and anonymous signatures less often. This would, of course, make expiry times coarser.

\begin{table}[t]
    \centering
    \caption{Average current and expected battery life (for 1400 mAh battery) for sense-and-send, with varying sample interval}
    \begin{tabular}{|l|c|c|c|} \hline
         & \textbf{AES Only} & \textbf{\newprot{} (enc)} & \hspace{-1ex}\textbf{\newprot{} (enc \& sign)}\hspace{-1ex} \\ \hline
        10 s & 32 \uA{} / 5.1 y & 50 \uA{} / 3.2 y & 60 \uA{} / 2.6 y \\ \hline
        20 s & 20 \uA{} / 8.1 y & 38 \uA{} / 4.2 y & 48 \uA{} / 3.3 y \\ \hline
        30 s & 15 \uA{} / 10 y & 34 \uA{} / 4.7 y & 44 \uA{} / 3.6 y \\ \hline
    \end{tabular}
    \label{tab:lifetime}
    \vspace{-3ex}
\end{table}

\subsubsection{Memory Budget}

Performing \enc{} operations requires only 6.5 KiB of data memory, which fits comfortably within the 32 KiB of data memory (RAM) available on the SAMR21. The code space required for our implementation of \enc{} and BLS12-381 is about 74 KiB, which fits comfortably in the 256 KiB of code memory (ROM) provided by the SAMR21.

A related question is whether storing a hash chain in memory (as required for authenticated broadcast, \secref{s:signature_lowpower}) is practical. If we use a granularity of 1 minute for authenticated broadcast, the length of the hash chain is 60. At the start of an hour, one computes the entire chain, storing 10 hashes equally spaced along the chain, each separated by 5 hashes. As one progresses along the hash chain, one re-computes each set of 5 hashes one additional time. This requires storage for only 15 hashes ($<4$ KiB memory) and computation of only 105 hashes \emph{per hour}, which is practical. One could possibly optimize performance further using \emph{hierarchical hash chains}~\cite{hu2005efficient}.

\subsubsection{Impact of \newprot{}'s Optimizations}
\newprot{}'s cryptographic optimizations (\secref{s:precomputation}, \secref{s:opt_anon_sig}, \secref{s:signature_lowpower}), which use \enc{} in a non-black-box manner, provide a 2-3x performance improvement. Our assembly optimizations (\secref{sec:implementation}) provide an additional 4-5x improvement. Without both of these techniques, \newprot{} would not be practical on low-power sensors.
Hybrid encryption and key reuse (\secref{ssec:hybrid}), which let \newprot{} use \enc{} \emph{rarely}, are also crucial.

\subsection{Comparison to Other Systems}\label{s:comparison}

\definecolor{darkgreen}{RGB}{0,128,0}
\newcommand{\mplus}{\textbf{\textcolor{darkgreen}{+}}}
\newcommand{\mmin}{\textbf{\textcolor{red}{--}}}

\begin{table*}[t]
    \centering
    \caption{Comparison of \newprot{} with other crypto-based IoT/cloud systems}
    {\setlength{\tabcolsep}{1ex}
    \begin{tabular}{|p{1.1in}||p{0.7in}|p{2.1in}|p{2.55in}|}\hline
       {\bf  Crypto Scheme / System} & {\bf Avoids Central Trust?} & {\bf Expressivity} & {\bf Performance} \\\hline\hline
        Trusted Key Server (\secref{sec:security}) &
        \tablebullets{
            \item[\mmin] No
        } &
        \tablebullets{
            \item[\mplus] Supports arbitrary policies (beyond hierarchies)
            \item[\mmin] No delegation
        } &
        \tablebullets{
            \item[\mplus] $\approx 10$ \us{} to encrypt 1 KiB message (same as \newprot{} in common case, faster for first message after key rotation)
            \item[\mmin] Trusted party generates one key \emph{per resource}
        }\\\hline
        PRE (Lattice-Based), as used in PICADOR \cite{borcea2017picador} &
        \tablebullets{
            \item[\mmin] No
        } &
        \tablebullets{
            \item[\mplus] Supports arbitrary policies (beyond hierarchies)
            \item[\mmin] No delegation
        } &
        \tablebullets{
            \item[\mplus] $\approx 5$ ms encrypt, $\approx 3$ ms decrypt (similar to \newprot{}: 3-4 ms)
            \item[\mmin] Trusted party must generate one key per sender-receiver pair
        }\\\hline
        PRE (Pairing-Based), as used in Pilatus \cite{shafagh2017secure} &
        \tablebullets{
            \item[\mplus] Yes
        } &
        \tablebullets{
            \item[\mmin] Delegation is single-hop
            \item[\mmin] Delegation is coarse (all-or-nothing)
            \item[\mplus] Can compute aggregates on encrypted data
        } &
        \tablebullets{
            \item[\mplus] 0.6 ms encrypt, 1.3 ms re-encrypt, 0.5 ms decrypt (faster than \newprot{}: 3-4 ms)
            \item[\mplus] Practical on constrained IoT device with crypto accelerator
        }\\\hline
        CP-ABE~\cite{bethencourt2007ciphertext} &
        \tablebullets{
            \item[\mplus] Yes
        } &
        \tablebullets{
            \item[\mplus] Good fit for RBAC policies
            \item[\mmin] Cannot support \newprot{}'s hierarchy abstraction with delegation
        } &
        \tablebullets{
            \item[\mplus] Only symmetric crypto in common case
            \item[\mmin] 14 ms encrypt for first time after key rotation (4-5x slower than \newprot{}: 3 ms)
        }\\\hline
        KP-ABE, as used in Sieve~\cite{wang2016sieve} &
        \tablebullets{
            \item[\mplus] Yes
        } &
        \tablebullets{
            \item[\mplus] Succinct delegation based on attributes
            \item[\mmin] Delegation is single-hop
        } &
        \tablebullets{
            \item[\mplus] Only symmetric crypto in common case
            \item[\mmin] 25 ms encrypt for first time after key rotation (8-9x slower than \newprot{}: 3 ms)
        }\\\hline
        Delegable Large Univ. KP-ABE~\cite{goyal2006attribute} (used in Alternative \newprot{} Design) &
        \tablebullets{
            \item[\mplus] Yes
        } &
        \tablebullets{
            \item[\mplus] Generalizes beyond hierarchies and supports multi-hop delegation (subsumes \newprot{})
        } &
        \tablebullets{
            \item[\mplus] Only symmetric crypto in common case
            \item[\mmin] 60 ms encrypt for first time after key rotation (20x slower than \newprot{}: 3 ms)
            \item[\mmin] Impractical for low-power sense-and-send
        }\\\hline\hline
       \textbf{This paper:} WKD-IBE ~\cite{abdalla2007generalized} with Optimizations, as used in \newprot{}  &
        \tablebullets{
            \item[\mplus] Yes
        } &
        \tablebullets{
            \item[\mplus] Delegation is multi-hop
            \item[\mplus] Succinct delegation of \emph{subtrees} of resources (or more complex sets, \secref{ssec:extensions})
            \item[\mplus] Non-interactive expiry
        } &
        \tablebullets{
            \item[\mplus] After key rotation (e.g., once per hour), 3 ms encrypt, 4 ms decrypt (\figref{sfig:wkdibe})
            \item[\mplus] Only symmetric crypto in common case
            \item[\mplus] Practical for ultra low-power ``sense-and-send'' \emph{without crypto accelerator}
        }\\\hline
    \end{tabular}}
    \vspace{-2ex}
    \label{tab:comparison}
\end{table*}

Table \ref{tab:comparison} compares \newprot{} to other systems and cryptographic approaches, particularly those geared toward IoT, in regard to security, expressivity and performance.
We treat these existing systems as they would be used in a messaging system for smart buildings (\secref{sec:introduction}). Table \ref{tab:comparison} contains quantitative comparisons to the cryptography used by these systems; for those schemes based on bilinear groups, we re-implemented them using our \newprot{} crypto library (\secref{s:cryptolib}) for a fair comparison.

\parhead{Security} The owner of a resource is considered \emph{trusted} for that resource, in the sense that an adversary who compromises \anent{} can read all of that \ent{}'s resources. In Table \ref{tab:comparison}, we focus on whether a single component is trusted for \emph{all} resources in the system. Note that, although Trusted Key Server (\secref{sec:security}) and PICADOR~\cite{borcea2017picador} encrypt data in flight, granting or revoking access to \anent{} requires participation of an \emph{online trusted party} to generate new keys.

\parhead{Expressivity} PRE-based approaches, which associate public keys with users and support delegation via proxy re-encryption, are fundamentally coarse-grained---a re-encryption key allows \emph{all} of a user's data to be re-encrypted. PICADOR~\cite{borcea2017picador} allows more fine-grained semantics, but does not enforce them cryptographically. ABE-based approaches typically do not support delegation beyond a single hop, whereas \newprot{} achieves multi-hop delegation. In ABE-based schemes, however, attributes/policies attached to keys can describe more complex sets of resources than \newprot{}. That said, a hierarchical resource representation is sufficient for \newprot{}'s intended use case, namely smart cities; existing syndication systems for smart cities, which do not encrypt data and are unconstrained by the expressiveness of crypto schemes, choose a hierarchical rather than attribute-based representation (\secref{sec:introduction}).

\parhead{Performance} The Trusted Key Server (\secref{sec:security}) is the most na\"ive approach, requiring an online trusted party to enforce all policy. Even so, \newprot{}'s performance in the common case is the same as the Trusted Key Server (\figref{fig:critical_bw2}), because of \newprot{}'s hybrid encryption---\newprot{} invokes \enc{} \emph{rarely}. Even when \newprot{} invokes \enc{}, its performance is not significantly worse than PRE-based approaches. An alternative design for \newprot{} uses the GPSW KP-ABE construction instead of \enc{}, but it is significantly more expensive. Based Table \ref{tab:lifetime}, the power cost of a \enc{} operation \emph{even when only invoked once per hour} contributes significantly to the overall energy consumption on the low-power IoT device; using KP-ABE instead of \enc{} would increase this power consumption by an order of magnitude, reducing battery life significantly.

\noindent
\textbf{In summary,} existing systems fall into one of three categories. (1) The Trusted Key Server allows access to resources to be managed by arbitrary policies, but relies on a \emph{central trusted party} who must be online whenever a user is granted access or is revoked. (2) PRE-based approaches, which permit sharing via re-encryption, cannot cryptographically enforce fine-grained policies or support multi-hop delegation. (3) ABE-based approaches, if carefully designed, \emph{can} achieve the same expressivity as \newprot{}, but are substantially less performant and are not suitable for low-power embedded devices.

\section{Related Work}\label{sec:related}
We organize related work into the following categories.

\parhead{Traditional Public-Key Encryption}
SiRiUS~\cite{goh2003sirius} and Plutus~\cite{kallahalla2003plutus} are encrypted filesystems based on traditional public-key cryptography, but they do not support delegable and qualifiable keys like \newprot{}.
Akl et al.~\cite{akl1983cryptographic} and further work~\cite{crampton2015cryptographic, crampton2006key}  propose using key assignment schemes for access control in a hierarchy.
A line of work \cite{tzeng2002time, huang2004new, atallah2007incorporating, atallah2009dynamic} builds on this idea to support both hierarchical structure and temporal access. Key assignment approaches, however, require the full hierarchy to be known at setup time, which is not flexible in the IoT setting. \newprot{} does not require this, allowing different subtrees of the hierarchy to be managed separately (\secref{s:overview}, ``Delegation'').

\parhead{Identity-Based Encryption}
Tariq et al.~\cite{tariq2014securing} use Identity-Based Encryption (IBE)~\cite{boneh2001identity} to achieve end-to-end encryption in publish-subscribe systems, without the router's participation in the protocol. However, their approach does not support hierarchical resources. Further, encryption and private keys are  on a credential-basis, so each message is encrypted multiple times according to the credentials of the recipients.

Wu et al.~\cite{wu2016privacy} use a prefix encryption scheme based on IBE for mutual authentication in IoT. Their prefix encryption scheme is different from \newprot{}, in that users with keys for identity \path{a/b/c} can decrypt messages encrypted with prefix identity \path{a}, \path{a/b} and \path{a/b/c}, but not identities like \path{a/b/c/d}.

\parhead{Hierarchical Identity-Based Encryption}
Since the original proposal of Hierarchical Identity-Based Encryption (HIBE)~\cite{gentry2002hierarchical}, there have been multiple
HIBE constructions~\cite{gentry2002hierarchical, boneh2004efficient, boneh2005hierarchical, gentry2009hierarchical}
and variants of HIBE~\cite{yao2004id, abdalla2007generalized}.
Although seemingly a good match for resource hierarchies, HIBE cannot be used as a black box to efficiently instantiate \newprot{}. We considered alternative designs of \newprot{} based on existing variants of HIBE, but as we elaborate in
\iffull
\appref{app:hibe},
\else
the appendix of our extended paper~\cite{fullpaper},
\fi
each resulting design is either less expressive or significantly more expensive than \newprot{}.

\parhead{Attribute-Based Encryption}
A line of work~\cite{yu2010achieving,wang2016sieve} uses Attribute-Based Encryption (ABE)~\cite{goyal2006attribute, bethencourt2007ciphertext} to delegate permission.
\iffull
For example, Yu et al.~\cite{yu2010achieving} and Sieve~\cite{wang2016sieve} use Key-Policy ABE (KP-ABE)~\cite{goyal2006attribute} to control which \ents{} have access to encrypted data in the cloud.
Some of these approaches also provide a means to revoke users, leveraging proxy re-encryption to safely perform re-encryption in the cloud.
\fi
Our work additionally supports hierarchically-organized resources and decentralized delegation of keys, which \cite{yu2010achieving} and \cite{wang2016sieve} do not address.
As discussed in
\iffull
\secref{s:comparison} and \appref{app:kpabe},
\else
\secref{s:comparison},
\fi
\enc{} is substantially more efficient than
KP-ABE and provides enough functionality for \newprot{}.
\iffull
\enc{} could be a lightweight alternative to KP-ABE for some applications.
\fi

Other approaches prefer Ciphertext-Policy ABE (CP-ABE)~\cite{bethencourt2007ciphertext}. Existing work~\cite{wang2010hierarchical, wang2011hierarchical} combines HIBE with CP-ABE to produce Hierarchical ABE (HABE), a solution for sharing data on untrusted cloud servers. The ``hierarchical'' nature of HABE, however, corresponds to the hierarchical organization of domain managers in an enterprise, not a hierarchical organization of \emph{resources} as in our work.

\parhead{Proxy Re-Encryption}
NuCypher KMS~\cite{egorov2017nucypher} allows a user to store data in the cloud encrypted under her public key, and share it with another user using Proxy Re-Encryption (PRE)~\cite{blaze1998divertible}. While NuCypher assumes limited collusion among cloud servers and recipients (e.g., $m$ of $n$ secret sharing) to achieve properties such as expiry, \newprot{} enforces expiry via cryptography, and therefore remains secure against \emph{any} amount of collusion. Furthermore, NuCypher's solution for resource hierarchies requires a keypair for each node in the hierarchy, meaning that the creation of resources is centralized. Finally, keys in NuCypher are not qualifiable.
\iffull
Given a key for \path{a/*}, one cannot generate a key for \path{a/b/*} to give to another \ent{}.
\fi

PICADOR~\cite{borcea2017picador}, a publish-subscribe system with end-to-end encryption, uses a lattice-based PRE scheme.
However, PICADOR requires a central Policy Authority to specify access control, by creating a re-encryption key for every permitted pair of publisher and subscriber. In contrast, \newprot{}'s access control is decentralized.

\parhead{Revocation Schemes}
Broadcast encryption (BE)~\cite{naor2001revocation, dodis2002public, boneh2005collusion, boneh2006fully, lewko2010revocation,
boneh2014low, boneh2017multiparty} is a mechanism to achieve revocation, by encrypting messages such that they are only decryptable by a specific set of users. However, these existing schemes do not support key qualification and delegation, and therefore, cannot be used in \newprot{} directly.
Another line of work builds revocation directly into the underlying cryptography primitive, achieving Revocable IBE~\cite{boldyreva2008identity, libert2009adaptive, seo2013revocable, watanabe2017new},
Revocable HIBE~\cite{seo2013efficient, seo2015revocable, liu2016compact} and Revocable KP-ABE~\cite{attrapadung2009conjunctive}. These papers use a notion of revocation in which URIs are revoked. In contrast, \newprot{} supports revocation at the level of keys. If multiple \ents{} have access to a URI, and one of their keys is revoked, then the other \ent{} can still use its key to access the resource.
Some systems~\cite{egorov2017nucypher, belguith2018secure} rely on the participation of servers or routers to achieve revocation.

\parhead{Secure Reliable Multicast Protocol}
Secure Reliable Multicast~\cite{malkhi2000secure, malkhi1997high} also uses a many-to-many communication model, and ensures correct data transfer in the presence of malicious routers. \newprot{}, as a protocol to \emph{encrypt} messages, is complementary to those systems.

\parhead{Authorization Services}
\newprot{} is complementary to authorization services for IoT, such as bw2~\cite{andersen2017democratizing}, Vanadium~\cite{taly2016distributed}, WAVE~\cite{andersen2019wave}, and AoT~\cite{neto2016aot}, which focus on expressing authorization policies and enabling \ents{} to prove they are authorized, rather than on encrypting data.
Droplet~\cite{shafagh2018droplet} provides encryption for IoT, but does not support delegation beyond one hop and does not provide hierarchical resources.

An authorization service that provides secure in-band permission exchange, like WAVE~\cite{andersen2019wave}, can be used for key distribution in \newprot{}. \newprot{} can craft keys with various permissions, while WAVE can distribute them without a centralized party by including them in its attestations.

\section{Conclusion}\label{sec:conclusion}

In this paper, we presented \newprot{}, a protocol for end-to-end encryption for IoT.
\newprot{} provides \emph{many-to-many} encrypted communication on complex resource hierarchies, supports decentralized  key delegation, and decouples senders from receivers. It provides  expiry for access to resources, reconciles anonymity and authorization via anonymous signatures, and allows revocation via tree-based broadcast encryption. Its encryption and integrity solutions are capable of running on embedded devices with strict energy and resource constraints, making it suitable for the Internet of Things.


\section*{Availability}

The \newprot{} cryptography library is available at \url{https://github.com/ucbrise/jedi-pairing} and our implementation of the \newprot{} protocol for bw2 is available at \url{https://github.com/ucbrise/jedi-protocol}.


\section*{Acknowledgments}
We thank our anonymous reviewers and our shepherd William Enck for their invaluable feedback.
We would also like to thank students from the RISE Security Group and BETS Research Group for giving us feedback on early drafts of this paper.
This research was supported by Intel/NSF CPS-Security \#1505773 and \#20153754,
DoE \#DE-EE000768,
California Energy Commission \#EPC-15-057,
NSF CISE Expeditions \#CCF-1730628, NSF GRFP \#DGE-1752814, and gifts from the Sloan Foundation, Hellman Fellows Fund, Alibaba, Amazon, Ant Financial, Arm, Capital One, Ericsson, Facebook, Google, Intel, Microsoft, Scotiabank, Splunk and VMware.


\bibliographystyle{plain}
\bibliography{jedi}

\appendix
\iffull
\section{\newprot{}'s Optimizations to \enc{}}\label{app:enc}

The purpose of \appref{app:enc} is twofold. First, it largely reproduces the construction of \enc{} provided in \S{}3.2 of \cite{abdalla2007generalized}, which we used in \newprot{}, for readers who would like additional context. Second, it fully explains the ``Precomputation with Adjustment'' optimizations (\secref{s:precomputation}) that \newprot{} uses for fast encryption and signatures on low-power platforms.

\subsection{Construction with our Optimizations}\label{ssec:enc_construction}

We present the \enc{} construction in \S{}3.2 of \cite{abdalla2007generalized}. For completeness, we include the extension to signatures described in \secref{ssec:anon_sig}, as well as our observation in \secref{ssec:enc} that the encryption algorithm in \cite{abdalla2007generalized} can be extended to work with arbitrary patterns.

The construction is based on bilinear groups. $\mathbb{G}$ and $\mathbb{G}_T$ are cyclic groups of prime order $p$, and they are related by a bilinear map $e: \mathbb{G} \times \mathbb{G} \rightarrow \mathbb{G}_T$. The security parameter $\secp$ is related to the number of bits of $p$. We implemented \enc{} using BLS12-381, an \emph{asymmetric} bilinear group whose bilinear map is of the form $e: \mathbb{G}_1 \times \mathbb{G}_2 \rightarrow \mathbb{G}_T$. We denote $\mathbb{G}_1$ the smaller and faster of the two source groups (elliptic curve over $\mathbb{F}_p$). The construction of \enc{} was originally defined for symmetric bilinear groups. Our description of the construction below shows how we mapped the construction onto an asymmetric bilinear group.

\noindent
$\mathbf{Setup}(1^\ell)$: Select $g \stackrel{\$}{\leftarrow} \mathbb{G}_2$ and $g_2, g_3, h_1, \ldots, h_\ell, h_s \stackrel{\$}{\leftarrow} \mathbb{G}_1$. Then select $\alpha \stackrel{\$}{\leftarrow} \mathbb{Z}_p$ and let $g_1 = g^\alpha$. Output:

\noindent
$\mathsf{Params} = (g, g_1, g_2, g_3, h_1, \ldots, h_\ell, h_s)$ and $\mathsf{MasterKey} = g_2^\alpha$.

\noindent
$\mathbf{KeyDer}(K, S)$: If $K$ is the master key, take $K = g_2^\alpha$. Select $r \stackrel{\$}{\leftarrow} \mathbb{Z}_p$. The private key for the pattern $S$ is the following triple:
\begin{equation*}
\left(
g_2^\alpha \cdot \left(g_3 \cdot \prod_{(i, a_i) \in \fixed(S)} h_i^{a_i}\right)^r,\quad
g^r,\quad
 \left\{(j, h_j^r)  \right\}_{j \in \free(S)} \right).
\end{equation*}
If $K$ is not the master key, then parse $K$ as $(k_0, k_1, B)$, where $B=\{(i, b_i)\}$. Select $t \sample \mathbb{Z}_p$. The private key for $S$ is:
\begin{equation*}
\begin{split}
\Biggl(
k_0 \cdot \left(g_3 \cdot \prod_{(i, a_i) \in \fixed(S)} h_i^{a_i}\right)^t \cdot \prod_{\substack{(i, a_i) \in \fixed(S) \\ (i, b_i) \in B}} b_i^{a_i},\quad
g^t \cdot k_1,\\
\left\{(j, h_j^t \cdot b_j)\right\}_{j \in \free(S)}
\Biggr).
\end{split}
\end{equation*}

\noindent
Observe that the resulting key is identically distributed, regardless of whether or not the input key $K$ is the master key.

\noindent
$\mathbf{Encrypt}(S, m)$: Here, $m \in \mathbb{G}_T$. Select $s \stackrel{\$}{\leftarrow} \mathbb{Z}_p$ and output
$$\left(
e(g_1, g_2)^s \cdot m,\quad
g^s,\quad
\left(g_3 \cdot \prod_{(i, a_i) \in \fixed(S)} h_i^{a_i}\right)^s
\right).$$

\noindent
$\mathbf{Decrypt}(K, C)$: Parse the key $K$ as $(k_0, k_1, B)$, and the ciphertext $C$ as $(X, Y, Z)$. Output
$$X \cdot e(k_1, Z) \cdot e(Y, k_0)^{-1}.$$

To support encryption over arbitrary patterns (\secref{ssec:enc}), we only compute the product over fixed slots, just as is done in the BBG HIBE construction~\cite{boneh2005hierarchical}. In contrast, the original \enc{} construction requires all slots in the pattern to be fixed, and iterates over all slots. The proof technique from \cite{boneh2005hierarchical}, namely padding the selected ID with zeros, can be used here to modify the proof of \enc{}~\cite{abdalla2007generalized} to account for our optimization that allows free slots to be used in encryption.

\subsection{Construction of \enc{} Signatures}\label{as:anon_sig}

We originally explained anonymous signatures in \secref{s:starting_wkdibe_signatures} by (1) dedicating one of the slots in a pattern for signing messages, and (2) defining signature generation as a call to $\mathbf{KeyDer}$ that fills that dedicated slot. We then proposed improvements to the signature scheme, to make it constant size (\secref{s:opt_anon_sig}) and make the verification procedure more efficient (\secref{s:signature_lowpower}). We formally describe our optimizations to the signature algorithm below. Note that we added an extra term to the public parameter $h_s$ that represents the slot dedicated to signing messages, and an analogous element ($s$, $b_s$) to the third component of each secret key. It was not present in the original \enc{} construction, and is not used for encryption in \appref{ssec:enc_construction}.

\noindent
$\mathbf{Sign}(K, m)$: Parse the key $K$ as $(k_0, k_1, B)$, where $(s, b_s) \in B$. Let $S$ be the pattern corresponding to $K$. Select $t \sample \mathbb{Z}_p$ and output
$$\left(
k_0 \cdot \left(g_3 \cdot h_s^m \cdot \prod_{(i, a_i) \in \fixed(S)} h_i^{a_i}\right)^t \cdot b_s^m,\quad
g^t \cdot k_1
\right)$$

\noindent
$\mathbf{Verify}(S, \sigma, m)$: Parse the signature $\sigma$ as $(s_0, s_1)$. Check:
$$e(s_0, g) \stackrel{?}{=} e(g_1, g_2) \cdot e\left(g_3 \cdot h_s^m \cdot \prod_{(i, a_i) \in \fixed(S)} h_i^{a_i},\quad s_1\right)$$

In contrast to optimized procedures above, the na\"{i}ve signature algorithm has $\mathbf{Sign}(K, m) = \mathbf{KeyDer}(K, T)$, and $\mathbf{Verify}(S, \sigma, m) = (\mathbf{Decrypt}(\sigma, \mathbf{Encrypt}(T, m^*)) \stackrel{?}{=} m^*)$ for $m^* \sample \mathbb{Z}_p^*$, where $T$ is the same as $S$ except that $T(s) = m$, the message being signed. The modification we make is that (1) the signature contains only the first two components of $\mathbf{KeyDer}(K, T)$ (since the third component is not used for decryption), and (2) the verification procedure checks that $\sigma$ is a private key corresponding to $T$ more efficiently than encrypting and decrypting a random message.

Finally, note that the $\textbf{Sign}$ function can be generalized to allow a key with pattern $P$ to produce a signature for pattern $S$ if $P$ matches $S$. This can be done trivially by first applying $\textbf{KeyDer}$ to obtain a key for $S$, and calling the $\textbf{Sign}$ on the existing key. Our implementation supports this \textbf{GeneralizedSign} functionality more efficiently, as follows:

\noindent
$\mathbf{GeneralizedSign}(K, S, m)$: Parse the key $K$ as $(k_0, k_1, B)$, where $(s, b_s) \in B$. Select $t \sample \mathbb{Z}_p$ and output
$$\left(
k_0 \cdot \left(g_3 \cdot h_s^m \cdot \hspace{-1ex} \prod_{(i, a_i) \in \fixed(S)} \hspace{-1ex} h_i^{a_i}\right)^t  \cdot b_s^m \cdot \hspace{-1ex} \prod_{\substack{(i, a_i) \in \fixed(S) \\ (i, b_i) \in B}} \hspace{-1ex} b_i^{a_i},\quad
g^t \cdot k_1
\right)$$

\subsection{Precomputation with Adjustment}\label{as:precomputation}

We formally explain the precomputation optimization introduced in \secref{s:precomputation} and \secref{s:signature_lowpower}.

\subsubsection{Precomputation with Adjustment for Encryption}

We define the new \enc{} operations as follows (as before, $\mathsf{Params}$ is an implicit parameter):

\noindent
$\mathbf{Precompute}(S)$: Output
$$g_3 \cdot \prod_{(i, a_i) \in \fixed(S)} h_i^{a_i}$$

\noindent
$\mathbf{AdjustPrecomputed}(Q_S, S, T)$: $Q_S$ is the existing precomputed value, $S$ is the pattern it corresponds to, and $T$ is the pattern whose precomputed value to compute. Output
$$
Q_S
\cdot
\prod_{\substack{(i, b_i) \in \fixed(T) \\ i \in \free(S)}} h_i^{b_i}
\cdot
\prod_{\substack{(i, a_i) \in \fixed(S) \\ i \in \free(T)}} h_i^{-a_i}
\cdot
\prod_{\substack{(i, a_i) \in \fixed(S) \\ (i, b_i) \in \fixed(T) \\ a_i \neq b_i}} h_i^{b_i - a_i}
$$

\noindent
$\mathbf{EncryptPrepared}(Q_S, m)$: Here, $m \in \mathbb{G}_T$. Select $s \stackrel{\$}{\leftarrow} \mathbb{Z}_p$ and output
$$\left(
e(g_1, g_2)^s \cdot m,\quad
g^s,\quad
Q_S^s
\right).$$

The above routines are used as described in \secref{s:precomputation}. Observe that the output of $\mathbf{Encrypt}(S, m)$ and the output of $\mathbf{EncryptPrepared}(\mathbf{Precompute}(S), m)$ are distributed identically---security of this optimization relies on this fact.

\subsubsection{Precomputation with Adjustment for Signatures}

As explained in \secref{s:signature_lowpower}, we think of $\mathbf{KeyDer}$ in two parts, $\mathbf{NonDelegableKeyDer}$ and $\mathbf{ResampleKey}$:

\noindent
$\mathbf{NonDelegableKeyDer}(K, S)$: Parse $K$ as $(k_0, k_1, B)$, where $B=\{(i, b_i)\}$. Output:
$$\left(
k_0 \cdot \prod_{\substack{(i, a_i) \in \fixed(S) \\ (i, b_i) \in B}} b_i^{a_i},\quad
k_1,\quad
\left\{(j, b_j)\right\}_{j \in \free(S)}
\right).$$

\noindent
$\mathbf{ResampleKey}(K, S)$: Parse $K$ as $(k_0, k_1, B)$. Sample $t \sample \mathbb{Z}_p$ and output
$$\left(
k_0 \cdot \left(g_3 \cdot \prod_{(i, a_i) \in \fixed(S)} h_i^{a_i}\right)^t,
g^t \cdot k_1,
\left\{(j, h_j^t \cdot b_j)\right\}_{j \in \free(S)}
\right).$$

$\mathbf{NonDelegableKeyDer}$ and $\mathbf{ResampleKey}$ are the ``two parts'' of $\mathbf{KeyDer}$ in the sense that $\{\mathbf{KeyDer}(K, S)\} = \{\mathbf{ResampleKey}(\mathbf{NonDelegableKeyDer}(K, S), S)\}$ where the distributions are over the sampled randomness.

We can take advantage of these functions to accelerate signing of messages. Note the similarity between $\mathbf{Sign}$ and $\mathbf{ResampleKey}$. The setup we consider is that \anent{} has a key for some pattern $R$ representing a URI prefix and time prefix. It will repeatedly sign messages with a pattern $S$ representing at a fully-qualified URI and specific time, where $R$ matches $S$. The next signature will be on pattern $T$ which shares the same URI as $S$ but corresponds to the next leaf in the time tree. The na\"ive algorithm is to call $\mathbf{QualifyKey}$ to obtain a key for $S$ and then call $\mathbf{Sign}$. The key idea behind the optimization is to instead call $\mathbf{NonDelegableKeyDer}$ to obtain a pseudo-key for $S$ (which is not safe to delegate), and then create a signature for that. Observe that the resulting signature is distributed in exactly the same way whether the na\"ive or optimized method is used.
(\secref{s:signature_lowpower}, which explains $\mathbf{Sign}$ as simply being a call to $\mathbf{KeyGen}$, explains this technique as splitting $\mathbf{KeyDer}$ into two parts, $\mathbf{NonDelegableKeyDer}$ and $\mathbf{ResampleKey}$.)

Now that we have described the signature process in terms of two calls, one to $\mathbf{NonDelegableKeyDer}$ and another to $\mathbf{Sign}$, we describe how to apply Precomputation with Adjustment to each of these operations. $\mathbf{Sign}$ can be accelerated using the same precomputed value we used to accelerate encryption. We have already shown how to ``adjust'' this precomputed value from $S$ to $T$.

\noindent
$\mathbf{SignPrepared}(K, Q_S, m)$: Parse the key $K$ as $(k_0, k_1, B)$. Let $S$ be the pattern corresponding to $K$; $Q_S$ must be the precomputed value corresponding to $S$. Select $t \sample \mathbb{Z}_p^*$ and output:
$$\left(
k_0 \cdot \left(h_s^m \cdot Q_S\right)^t,\quad
g^t \cdot k_1
\right)$$

Finally, we explain how the result of $\mathbf{NonDelegableKeyDer}$ can be adjusted from pattern $S$ to pattern $T$. The procedure also requires the parent key (whose pattern we denote $R$), on which $\mathbf{NonDelegableKeyDer}$ was called to obtain the key corresponding to pattern $S$.

\noindent
$\mathbf{AdjustNonDelegable}(P, C, S, T)$: Parse the parent key $P$ as $P$ as $(p_0, p_1, B)$ where $B = \{(i, b_i)\}$. Parse the child key $C$ as $C = (k_0, k_1, Z)$. $S$ is the pattern corresponding to $C$, and $T$ is the pattern that the resulting key will correspond to. Output:
\begin{equation*}
\begin{split}
\Biggl(
k_0 \cdot \prod_{\substack{(i, t_i) \in \fixed(T) \\ i \in \free(S)}} b_i^{t_i} \cdot \prod_{\substack{(i, s_i) \in \fixed(S) \\ i \in \free(T)}} b_i^{-s_i} \cdot \prod_{\substack{(i, s_i) \in \fixed(S) \\ (i, t_i) \in \fixed(T)}} b_i^{t_i-s_i},\\
k_1,\quad
\left\{(j, b_j)\right\}_{j \in \free(T)}
\Biggr).
\end{split}
\end{equation*}

To sign a message each hour, \newprot{} maintains the result of $\mathbf{Precompute}$, $Q_S$ (as it does for encryption), and also the result of $\mathbf{NonDelegableKeyGen}$, $C$, derived from its key. Then it adjusts both values, using $\mathbf{AdjustPrecomputed}$ and $\mathbf{AdjustNonDelegable}$, when the pattern used to sign changes. To sign a message $m$, it computes $\mathbf{SignPrepared}(C, Q_S, m)$.

As an additional optimization, we only compute the first two elements of the output of $\mathbf{NonDelegableKeyDer}$ and $\mathbf{AdjustNonDelegableKeyDer}$ when using it to produce signatures.

\section{Revocation in \newprot{} using the SD Method}\label{sec:sd}
The SD algorithm for tree-based broadcast encryption is introduced in prior work \cite{naor2001revocation, dodis2002public}. In this section, we explain how to extend SD algorithm to support delegation. We follow the same approach to apply this extension in \newprot{} as we did for the CS method in \secref{sec:revocation}.

\subsection{Overview of SD Method}
We first provide a brief overview of the SD method, as described in \cite{dodis2002public}. Both the CS and SD method use a technique called \emph{Subset Cover}. Let $N$ be the (finite) set of all users. A family of subsets $\mathcal{S}$ is defined over $N$. Each each element $S_i$ of $\mathcal{S}$ is a subset of $N$, and corresponds to a keypair ($\mathsf{PK}_{S_i}$, $\mathsf{SK}_{S_i}$). Each user is given the secret key $\mathsf{SK}_{S_i}$ for every set $S_i$ containing that user.
Now, suppose that someone wants to encrypt a message such that it is visible only to a subset of users, denoted $B$. To achieve this, she must find a \emph{subset cover} for $B$: sets $B_1, \ldots, B_p \in \mathcal{S}$ such that $B = \bigcup_{j = 1}^p B_i$. Then, she encrypts her message under the public key $\mathsf{PK}_{B_j}$ for each $B_j$ in the subset cover. Only users in $B$ have the secret key for one of the $B_j$ to decrypt the message.

The CS method (as described in \secref{ssec:immediate_revocation}) uses a subset family $\mathcal{S}$, where each subset is a complete subtree of the binary tree over the users.
The SD method uses a different subset family $\mathcal{S}$ than the CS method, in which each \leaf{} belongs to more subsets. Each subset $S_{ij} \in \mathcal{S}$ is defined in terms of two nodes, $v_i$ and $v_j$, where $v_j$ is a descendant of $v_i$. $S_{ij}$ contains all \leaves{} in the subtree rooted at $v_i$ but not in the subtree rooted at $v_j$. Each \leaf{} now belongs to $\sum_{k=1}^{\log(n)}(2^k-k)$ different subsets, and therefore has $O(n)$ secret keys.

Each subset $S_{ij}$ is associated with an ID in HIBE as follows. The first component of the ID is $\mathsf{ID}(v_i)$. The remaining components of the ID are $\mathsf{ID}(v_j)$, where each bit occupies one component, which makes it possible to generate the private key for $S_{ik}$ from the private key for $S_{ij}$, as long as $v_k$ is a descendant of $v_j$.

If a \leaf{} belongs to both $S_{ij}$ and $S_{ik}$, it only needs to be given $S_{ij}$. With this optimization, each \leaf{} only has $O(\log^2 n)$ HIBE secret keys.

For encryption, the sender must find a subset cover over all unrevoked \leaves{}, as in the CS method. The algorithm for finding the optimal subset cover is introduced in \cite{naor2001revocation}. \cite{naor2001revocation} also proves that only $O(r)$ subsets (and therefore $O(r)$ encryptions) are needed.

\subsection{Extension of SD Method for Delegation}
We follow the same idea in \secref{ssec:immediate_revocation} that each key corresponds to a range of consecutive \leaves{}, and the delegation and revocation follow the same rule. In contrast to CS method, we use \enc{} instead of HIBE to reduce the storage requirement.

\subsubsection{Analysis of Private Key Storage}

\Anent{} with a set of consecutive \leaves{} denoted $\mathsf{\LF{}}$ must be able to generate private keys for all subsets $S_{jk}\in \mathcal{S}$ where $S_{jk}\cap \mathsf{\LF{}}\neq \varnothing$.

We define $S_j$ as the set of \leaves{} in the subtree rooted at $v_j$, and $\mathsf{copath}(v)$ as in a Merkle Tree~\cite{merkle1989certified}. Then define $\mathsf{CoPath(\LF{})}$ as follows:
$$\mathsf{CoPath(\LF{})}=\mathsf{copath}(\mathsf{leaf(\lf{}_{min})})\cup\mathsf{copath}(\mathsf{leaf(\lf{}_{max})})$$
\noindent
where $\mathsf{\lf{}_{min}}$ stands for the first \leaf{} in $\mathsf{\LF{}}$ and $\mathsf{\lf{}_{max}}$ stands for the last \leaf{} in $\mathsf{\LF{}}$. $\mathsf{LeftChild}(v_i)$ and $\mathsf{RightChild}(v_i)$ stands for the left child node and the right child node of $v_i$ respectively.

Then \anent{} with $\mathsf{\LF{}}$ only needs to store the private keys for subsets $S_{jk}$ satisfying the following properties:
\begin{itemize}[leftmargin=*]
    \item $S_{jk}\cap \mathsf{\LF{}}\neq \varnothing$ and
    \begin{enumerate}[leftmargin=*]
        \item If $S_{\mathsf{LeftChild}(v_j)}\cap \mathsf{\LF{}}\neq \varnothing$ and $S_{\mathsf{RightChild}(v_j)}\cap \mathsf{\LF{}}\neq \varnothing$, then $v_k$ must satisfy $v_k=\mathsf{LeftChild}(v_j)$ or $v_k=\mathsf{RightChild}(v_j)$.
        \item If $S_{\mathsf{LeftChild}(v_j)}\cap \mathsf{\LF{}}= \varnothing$ or $S_{\mathsf{RightChild}(v_j)}\cap \mathsf{\LF{}}= \varnothing$, then $v_k$ must satisfy $v_k\in \mathsf{CoPath(\LF{})}$.
    \end{enumerate}
\end{itemize}

There will be $O(k)$ subsets $S_{jk}$ satisfying the first condition, and $O(\log^2 n)$ subsets $S_{jk}$ satisfying the second condition. Thus the total number of keys will be $O(k+\log^2 n)$. Note that we leverage HIBE as in the original SD Method to achieve $O(\log^2 n)$ keys for the second kind of subset.

However, we cannot use HIBE to reduce storage for the first kind of subsets because for two subsets $S_{ab}$ and $S_{cd}$ in the first type, either $v_a\neq v_c$ or $v_b$ is the sibling of $v_d$ (no hierarchical relations). Recall that HIBE can only help reduce storage when $v_a=v_c$ and $v_b$ is the ancestor or descendant of $v_d$. However, we observe that $v_a$ might be the ancestor or descendant of $v_c$, which means we can use \enc{} to support another concurrent hierarchy. We discuss this in \appref{s:sd_opt}.

\subsubsection{Optimization}\label{s:sd_opt}
We will represent the $O(k)$ subsets satisfying the first condition using only $O(\log k)$ keys. We will do this by introducing a hierarchy for the first index (the $i$ in $S_{ij}$). However, we are already using a hierarchy for the second index (the $j$ in $S_{ij}$), and HIBE cannot support two hierarchies simultaneously. Therefore, we will use \enc{} to combine the two hierarchies, as we did in \secref{ssec:concurrent_hierarchies}.

We use a \enc{} system with $2\log(n) + 1$ slots. We use the first $\log(n) + 1$ slots to support a hierarchy for the first index, and the remaining $\log n$ slots to support a hierarchy for the second index. Each subset $S_{ij}$ is associated with a pattern in \enc{} as follows. The first hierarchy encodes $\mathsf{ID}(v_i)$ (each bit in a separate slot), followed by \texttt{\$}, a termination symbol. The second hierarchy encodes $\mathsf{ID}(v_j)$, where each bit is stored in a separate slot.

With this construction, we preserve all of the functionality from before. The second hierarchy representing the second index can be extended, just as HIBE in original SD method. As before, the first index cannot be modified, because the terminator symbol \texttt{\$} prevents the first hierarchy from being extended. Alternatively, limited delegation can be used for this purpose (as in \secref{ssec:immediate_revocation}), to avoid using one component of the \enc{} pattern for the terminator symbol \texttt{\$}. Note that there are two logical hierarchies in this construction, and only one of them needs to be made unqualifiable via limited delegation.

Now, we can more efficiently represent the secret keys for the $O(k)$ subsets satisfying the first condition above. Observe that the $k$ consecutive \leaves{} can be grouped into $O(\log k)$ consecutive subtrees. Let $\mathsf{TR(\LF{})}$ be the set containing the root nodes of these subtrees. For each node $v_i \in \mathsf{TR(\LF{})}$, we must provide the private key corresponding to $S_{i\mathsf{LeftChild}(v_i)}$ and $S_{i\mathsf{RightChild}(v_i)}$, without including the terminator symbol \texttt{\$} in the attribute set. From these $O(\log k)$ \enc{} keys, \anent{} can obtain all $O(k)$ keys corresponding to subsets $S_{ij}$ satisfying the first condition above. This reduces the total number of secret keys for $k$ consecutive \leaves{} from $O(k + \log^2 n)$ to $O(\log k + \log^2 n)$.

\subsubsection{Using Delegable SD Method in \newprot{}} As in \secref{ssec:using_revocation}, we use the first $\ell_1+\ell_2$ slots for URI and expiry, and use the remaining slots to instantiate the above revocation protocol.
Different from the delegable CS method (\secref{s:delegable_cs}), the number of slots used for revocation is $\ell_3=2\log(n)+1$, and they are used by the delegable SD method as two independent hierarchies. So the total number of \enc{} keys is $O((\log k + \log^2 n) \cdot \log T)$, where $T$ is the length of the time range for expiry.

\section{Alternative Designs of \newprot{} using HIBE}\label{app:hibe}

We explore alternative designs we could have used for \newprot{}, focusing on existing cryptographic primitives we considered using instead of \enc{}.

\subsection{Hierarchical Identity-Based Encryption}

Given that \newprot{} represents URIs and time as hierarchies, Hierarchical Identity-Based Encryption (HIBE)~\cite{boneh2005hierarchical} may seem like a natural building block to use. We can encode a URI in an ID for HIBE, just as we did for \enc{}. For example, the URI prefix \path{a/b/*} can be encoded into an ID as \texttt{("a", "b")}. This preserves the crucial property that the private key for a URI prefix can be used to generate the private key for any URI with that prefix. The same thing works for expiry: for example, the timestamp June 08, 2017 at 6 AM could be encoded into an ID as \texttt{("2017", "June", "08", "06")}.

However, HIBE cannot \emph{simultaneously} support a URI hierarchy and an expiry hierarchy. A simple approach would be to concatenate the IDs. For example, the key for the URI prefix \path{a/b/*} and time prefix \path{2018/Jun/*} would have the ID \texttt{("2018", "Jun", "a", "b")}. However, this idea is flawed: only the URI can be extended, not the expiry time. The same problem applies to the URI, if we put the URI before the time in the ID. Another possible approach is to interleave the resource hierarchy and time hierarchy, using metadata to distinguish the elements. In this setup, each (resource, time) pair corresponds to multiple IDs in the HIBE system---all possible interleavings of the URI the Expiry IDs. However, for a URI of length $m$ and a time of length $n$, there are exponentially many IDs, $\frac{(m + n)!}{m! \cdot n!}$, and each message sent with that URI and time must be encrypted under all of those IDs. Therefore, this approach is infeasible.

Another strawman is to use two HIBE systems, one for URIs and one for expiry. Each message is encrypted twice, using the URI ID in the first system and again using the time ID in the second system. During delegation, each \ent{} is provided with a key from the first system for the URI, and a set of keys from the second system for the time range. The problem is that this approach is not collusion-resistant: \anent{} who is given two delegations, one for the correct URI that has expired, and one for the wrong URI that has not expired, can decrypt messages by combining keys from different delegations.

\subsection{Variants of HIBE other than \enc{}}

Existing work~\cite{yao2004id} has proposed extending HIBE to MHIBE, which supports ID-based encryption for multiple concurrent hierarchies. We could use MHIBE in \newprot{} to combine the URI hierarchy with the expiry hierarchy. However, the proposed MHIBE schemes are significantly less performant than \enc{}: for two hierarchies, they have quadratically-sized private keys and ciphertexts. Size and performance degrade exponentially in the number of hierarchies. Furthermore, a formal treatment of MHIBE is not provided.

Another extension is forward secure HIBE~\cite{yao2004id, boneh2005hierarchical}, or fs-HIBE for short. The BBG construction of fs-HIBE~\cite{boneh2005hierarchical} has linear size and performance. We considered using its mechanism for forward security \emph{in reverse}, to achieve expiry with HIBE. However, the fs-HIBE construction has linear-size ciphertexts and linear-time decryption, whereas \enc{} has constant-size ciphertexts and constant-time decryption.
In the context of a real system, this is important: $\mathbf{Encrypt}$ and $\mathbf{Decrypt}$ are used in the critical path, so encryption time, decryption time, and ciphertext size must be as small as possible. In contrast, $\mathbf{Delegate}$ is only used occasionally, so the size of private keys is less important.

Most importantly, \enc{} is a more powerful primitive than either MHIBE or fs-HIBE. In particular, \enc{} supports the \texttt{+} wildcard for URIs and timestamps (\appref{ssec:plus_wildcard}), which MHIBE and fs-HIBE do not.

\section{Building Block Comparison: KP-ABE}\label{app:kpabe}

In Key-Policy Attribute-Based Encryption (KP-ABE), a message is encrypted with a set of attributes. An attribute set is like a string of bits; each attribute is either present in the set (1) or not present (0). Private keys are generated with an \emph{access tree}, which can be thought of as a circuit. A private key can decrypt a message if its access tree, evaluated on the bits representing the attribute set of the message, evaluates to 1.

We are interested in KP-ABE with two properties:
\begin{enumerate}[leftmargin=*, noitemsep]
    \item \textbf{Delegable.} Given the private key for an access tree, one can generate a private key for a more restrictive access key, and delegate it to another \ent{}.
    \item \textbf{Large Universe.} The space of attributes $\mathcal{A}$ is exponentially large in the security parameter $\secp$. This is similar to Identity-Based Encryption (IBE)~\cite{boneh2001identity}, as any string of bytes can be hashed to an attribute.
\end{enumerate}

The GPSW construction~\cite{goyal2006attribute} of KP-ABE, based on bilinear groups, satisfies these properties. In fact, KP-ABE with these two properties subsumes \enc{}. A pattern $T$ in \enc{} can be converted to an attribute set in Delegable Large Universe KP-ABE by hashing each non-$\bot$ component of $T$, concatentated with its index, to an attribute in KP-ABE. Private keys in \enc{} can be expressed as an access tree consisting of a single many-input AND gate.\footnote{Ciphertext-Policy ABE (CP-ABE) is not suitable for this construction. This is because attributes cannot be added to secret keys during delegation, as per the security guarantees of CP-ABE.}

The subsections below compare BBG HIBE, \enc{}, and GPSW KP-ABE. Although we include HIBE for the sake of comparison, note that HIBE is not expressive enough to realize the \newprot{} protocol (as explained in \appref{app:hibe}).

\subsection{Performance Comparison}

We compare the performance of KP-ABE, \enc{}, and HIBE in terms of the number of exponentiations and pairings, the most expensive operations in the elliptic curves. This is shown in Table \ref{tab:complexity}. $\ell$ is the total number of attributes that can be used for a single message (the implicit argument to $\mathbf{Setup}$). For Encrypt, Decrypt and $\text{KeyDer}^1$, $r$ is the number of attributes of the key or ciphertext. For $\text{KeyDer}^2$, $n$ is the number of attributes of the starting key, and $r$ is the number of attributes of the ending key. This shows that \enc{}'s performance is theoretically better than KP-ABE's performance. Furthermore, \enc{} is just efficient as HIBE, even though \enc{} is more expressive than HIBE. As discussed in \appref{app:hibe}, HIBE is not expressive enough to efficiently instantiate \newprot{}.

\begin{table}[t]
    \centering
    \caption{Performance comparison of HIBE, \enc{}, and KP-ABE in terms of pairings and exponentiations. We omit operations that can be precomputed once for all IDs (attribute sets) in the HIBE/\enc{}/KP-ABE system. $\mathbf{KeyDer^1}$ indicates deriving the new key from the master key, and $\mathbf{KeyDer^2}$ indicates the other case.}
    \begin{tabular}{|l|l|c|c|} \hline
    Operation & Scheme & Pairings & Exponentiations\\ \hline\hline
    Encrypt & HIBE & $0$ & $3 + r$ \\ \hline
    Encrypt & \enc{} & $0$ & $3 + r$ \\ \hline
    Encrypt & KP-ABE & $0$ & $2 + r \cdot (\ell + 3)$ \\ \hline\hline
    Decrypt & HIBE & $2$ & $\leq r$ \\ \hline
    Decrypt & \enc{} & $2$ & $\leq r$ \\ \hline
    Decrypt & KP-ABE & $r + 1$ & $2r$ \\ \hline\hline
    $\text{KeyDer}^1$  & HIBE & $0$ & $\ell + 2$ \\ \hline
    $\text{KeyDer}^1$  & \enc{} & $0$ & $\ell + 2$\\ \hline
    $\text{KeyDer}^1$  & KP-ABE & $0$ & $r \cdot (\ell + 5)$ \\ \hline\hline
    $\text{KeyDer}^2$ & HIBE & 0 & $(r - n) + \ell + 2$ \\ \hline
    $\text{KeyDer}^2$ & \enc{} & $0$ & $(r - n) + \ell + 2$ \\ \hline
    $\text{KeyDer}^2$ & KP-ABE & $0$ & $2n + r \cdot (\ell + 5)$ \\ \hline
    \end{tabular}
    \label{tab:complexity}

    \vspace{-1ex}
\end{table}

\subsection{Size Comparison}

We list the size of ciphertexts and private keys in Table \ref{tab:size}: $r$ is the number of attributes in the ciphertext or private key, and $\ell$ is the maximum number of slots or attributes used to encrypt a message. Note that ciphertexts in \enc{} are constant size, whereas ciphertexts in KP-ABE are linear.

\begin{table}[t]
    \centering
    \caption{Size comparison of HIBE, \enc{}, and KP-ABE in terms of number of group elements. For elliptic curves that we used, elements of $\mathbb{G}_1$ are 48 B each, elements of $\mathbb{G}_2$ are 96 B each, and elements of $\mathbb{G}_T$ are 576 B each.}
    \begin{tabular}{|l|l|c|c|c|} \hline
    Object & Scheme & $\mathbb{G}_1$ & $\mathbb{G}_2$ & $\mathbb{G}_T$\\ \hline\hline
    Ciphertext & HIBE & $1$ & $1$ & $1$ \\ \hline
    Ciphertext & \enc{} & $1$ & $1$ & $1$ \\ \hline
    Ciphertext & KP-ABE & $r$ & $1$ & $1$ \\ \hline\hline
    Private Key & HIBE & $\ell - r + 1$ & $1$ & $0$ \\ \hline
    Private Key & \enc{} & $\ell - r + 1$ & $1$ & $0$ \\ \hline
    Private Key & KP-ABE & $r$ & $r$ & $0$ \\ \hline
    \end{tabular}
    \label{tab:size}
    \vspace{-1ex}
\end{table}

\newcommand{\mpk}{\ensuremath{\mathsf{mpk}}}
\newcommand{\msk}{\ensuremath{\mathsf{msk}}}
\newcommand{\calA}{\ensuremath{\mathcal{A}}}
\newcommand{\calB}{\ensuremath{\mathcal{B}}}
\newcommand{\calC}{\ensuremath{\mathcal{C}}}
\newcommand{\calL}{\ensuremath{\mathcal{L}}}

\section{Formal Definitions and Proofs}\label{app:proofs}
In this section, we present our formal definitions of \newprot{}'s security guarantees and the corresponding proofs. Our proofs use the notion of IND-sWKID-CPA security defined in \S{}3 of \cite{abdalla2007generalized}. They also depend on a property of the construction of \enc{} called \emph{history-independence}, which we explain below.

\subsection{Definition of History-Independence and Proof of Theorem \ref{thm:anonymity}}
Informally, a \enc{} construction is \textit{history-independent} if, for any fixed pattern $S$, the result of $\mathbf{KeyDer}$ to produce a key with pattern $S$, assuming that the starting key is either the master key or corresponds to a pattern that matches $S$, is distributed in exactly the same way regardless of the particular starting key used. The idea is that, given a key for a specified pattern $S$, one learns nothing about the sequence of $\mathbf{KeyDer}$ operations that produced the key.

We formally define history-independence below:
\begin{definition}[History-Independence]
A \enc{} construction is said to be \textit{history-independent} if, for every pattern $S$, and for any two well-formed keys $k_1$ (corresponding to pattern $P_1$) and $k_2$ (corresponding to pattern $P_2$) in the same \enc{} system such that $P_1$ matches $S$ and $P_2$ matches $S$, it holds that
$$
\{\mathbf{KeyDer}(k_1, S)\} = \{\mathbf{KeyDer}(k_2, S)\}
$$
where the distributions are over the randomness sampled internally by $\mathbf{KeyDer}$. If $k_1$ (respectively, $k_2$) is the master key, the pattern $P_1$ (respectively, $P_2$) is one where all slots are free.
\end{definition}

Below, we show that the construction of \enc{} presented in \appref{app:enc}, which \newprot{} uses, satisfies this property.

\begin{theorem}\label{thm:history_independence}
The construction of \enc{} presented in \appref{app:enc} is history-independent.
\end{theorem}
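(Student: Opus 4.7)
The plan is to show that for any starting key $k$ corresponding to a pattern $P$ that matches $S$, the distribution of $\mathbf{KeyDer}(k, S)$ is identical to the distribution of $\mathbf{KeyDer}(\mathsf{MasterKey}, S)$. Since this canonical distribution does not depend on $k$ or $P$, history-independence follows.

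First I would set up notation by observing that every well-formed key for some pattern $P$ has the ``canonical form'' $(g_2^\alpha \cdot Q_P^{r_{\text{old}}},\, g^{r_{\text{old}}},\, \{(j, h_j^{r_{\text{old}}})\}_{j\in\free(P)})$, where $Q_P = g_3 \cdot \prod_{(i,p_i)\in\fixed(P)} h_i^{p_i}$ and $r_{\text{old}} \in \mathbb{Z}_p$ is the ``accumulated randomness'' from all prior $\mathbf{KeyDer}$ calls (for the master key, $r_{\text{old}} = 0$). This is easily proved by induction on the number of $\mathbf{KeyDer}$ invocations, using the non-master branch of the construction in \appref{app:enc}. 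The base case is the master key itself; the inductive step is exactly the algebraic identity established in the next step.

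Next I would carry out the core algebraic computation. Apply $\mathbf{KeyDer}$ to a key in canonical form for pattern $P$, using pattern $S$ and fresh randomness $t \sample \mathbb{Z}_p$. Because $P$ matches $S$, every slot fixed in $P$ is also fixed in $S$ with the same value, so $\fixed(P) \subseteq \fixed(S)$, and every $j \in \free(S)$ lies in $\free(P)$. The new third component is $\{(j, h_j^t \cdot h_j^{r_{\text{old}}})\}_{j\in\free(S)} = \{(j, h_j^{r_{\text{old}}+t})\}_{j\in\free(S)}$, and the new second component is $g^{r_{\text{old}}+t}$. For the first component, using $\fixed(P) \subseteq \fixed(S)$ and the fact that the extra product term $\prod_{(i,a_i)\in\fixed(S),\,(i,b_i)\in B} b_i^{a_i}$ equals $\prod_{i\in\fixed(S)\setminus\fixed(P)} h_i^{a_i r_{\text{old}}}$, one checks that
\begin{equation*}
Q_P^{r_{\text{old}}} \cdot \prod_{i\in\fixed(S)\setminus\fixed(P)} h_i^{a_i r_{\text{old}}} = Q_S^{r_{\text{old}}},
\end{equation*}
so the new first component simplifies to $g_2^\alpha \cdot Q_S^{r_{\text{old}}} \cdot Q_S^t = g_2^\alpha \cdot Q_S^{r_{\text{old}}+t}$. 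Hence the output is again in canonical form for $S$, with accumulated randomness $r := r_{\text{old}} + t$.

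Finally I would conclude. Since $t$ is uniform on $\mathbb{Z}_p$, the quantity $r = r_{\text{old}} + t$ is uniform on $\mathbb{Z}_p$ regardless of $r_{\text{old}}$. Thus $\mathbf{KeyDer}(k, S)$ is distributed exactly as the canonical form for $S$ with a uniformly random exponent, independently of the starting key $k$. Applying this to both $k_1$ and $k_2$ in the definition of history-independence yields identical distributions, proving the theorem. The only nontrivial step is the algebraic identity above; everything else is bookkeeping about which slots are fixed in $P$ versus $S$.
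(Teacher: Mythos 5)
Your proposal is correct and follows essentially the same route as the paper's proof: both write the well-formed starting key in canonical form with accumulated exponent $r_{\text{old}}$ (or $r_0$), show that $\mathbf{KeyDer}$ outputs the canonical form for $S$ with exponent $r_{\text{old}}+t$, and conclude from the uniformity of $r_{\text{old}}+t$. Your version merely spells out the induction on well-formedness and the algebraic identity $Q_P^{r_{\text{old}}}\cdot\prod_{i\in\fixed(S)\setminus\fixed(P)}h_i^{a_i r_{\text{old}}}=Q_S^{r_{\text{old}}}$ that the paper leaves implicit.
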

\begin{proof}[Proof of Theorem \ref{thm:history_independence}]
We will show that for any pattern $S$ and any well-formed key $k$ corresponding to a pattern $P$ that matches $S$, it holds that
\begin{multline*}
\{\mathbf{KeyDer}(k, S)\} = \\
\left\{\left(
g_2^\alpha \cdot \left(g_3 \cdot\hspace{-2ex}\prod_{(i, a_i) \in \fixed(S)} h_i^{a_i}\right)^r,
\ g^r,
\ \left\{(j, h_j^r)  \right\}_{j \in \free(S)} \right)\right\}_{r \sample \mathbb{Z}_p}
\end{multline*}
Because the formula on the right-hand side of the above equation only depends on $S$ and the public parameters (not the particular key $k$), this is sufficient to demonstrate history-independence of the \enc{} construction.

We handle the proof in two cases:

\parhead{Case 1} Suppose that $k$ is the master key. Then the above result is true by definition, according to the formula given for $\mathbf{KeyDer}$ in \appref{ssec:enc_construction}.

\parhead{Case 2} Suppose that $k$ is not the master key. Then, because $k$ is well-formed, we can write that
\begin{equation*}
k = \left(
g_2^\alpha \cdot \left(g_3 \cdot\hspace{-2ex}\prod_{(i, a_i) \in \fixed(P)} h_i^{a_i}\right)^{r_0},
\ g^{r_0},
\ \left\{(j, h_j^{r_0})  \right\}_{j \in \free(P)} \right)
\end{equation*}
for some fixed $r_0 \in \mathbb{Z}_p$. By applying the formula for $\mathbf{KeyDer}$ in \appref{ssec:enc_construction}, we can see that the key output by $\mathbf{KeyDer}$ has the form
\begin{equation*}
\left(
g_2^\alpha \cdot \left(g_3 \cdot\hspace{-2ex}\prod_{(i, a_i) \in \fixed(S)} h_i^{a_i}\right)^{r_0 + t},
\ g^{r_0 + t},
\ \left\{(j, h_j^{r_0 + t})  \right\}_{j \in \free(S)} \right)
\end{equation*}
for $t\sample\mathbb{Z}_p$. Because $r_0 + t$ is uniformly distributed in $\mathbb{Z}_p$, the output key has the desired distribution (take $r = r_0 + t$).
\end{proof}

Theorem \ref{thm:anonymity} follows directly from the fact that the \enc{} construction used in \newprot{} is history-independent: each ``signature'' in \enc{} is the same as a private key generated with a special slot filled in with the message being signed. Therefore, signatures inherit the history-independence of keys, resulting in the property in Theorem \ref{thm:anonymity}. With the proposed improvement to make signatures constant size (\secref{s:opt_anon_sig}), the signature consists of just the first two terms of the resulting private key, but it remains history-independent nonetheless.

For completeness, we prove Theorem \ref{thm:anonymity} below, using the same notation for signatures established in \appref{ssec:enc_construction}. The proof is very similar to the proof of Theorem \ref{thm:history_independence}
\begin{proof}[Proof of Theorem \ref{thm:anonymity}]
We will show that for any pattern $S$, key $k$ corresponding to pattern $S$, and message $m$, it holds that
\begin{multline*}
\{\mathbf{Sign}(k, m)\} =\\
\left\{\left(
g_2^\alpha \cdot \left(g_3 \cdot h_s^m \cdot \hspace{-2ex} \prod_{(i, a_i) \in \fixed(S)} h_i^{a_i}\right)^r,\ g^r
\right)\right\}_{r \sample \mathbb{Z}_p}
\end{multline*}
Because the right-hand side of the above equation depends only on $S$ and the public parameters (not the particular key $k$), this is sufficient to prove Theorem \ref{thm:anonymity} (that any two keys corresponding to $S$ produce the same signature distribution).

Observe that for a well-formed key $k$,
\begin{equation*}
k = \left(
g_2^\alpha \cdot \left(g_3 \cdot\hspace{-2ex}\prod_{(i, a_i) \in \fixed(S)} h_i^{a_i}\right)^{r_0},
\ g^{r_0},
\ \left\{(j, h_j^{r_0})  \right\}_{j \in \free(S)} \right)
\end{equation*}
for some fixed $r_0 \in \mathbb{Z}_p$. Applying the formula for $\mathbf{Sign}$ in \appref{ssec:enc_construction}, the signature has the form
\begin{equation*}
\left(
g_2^\alpha \cdot \left(g_3 \cdot h_s^m \cdot \hspace{-2ex} \prod_{(i, a_i) \in \fixed(S)} h_i^{a_i}\right)^{r_0+t},\ g^{r_0+t}
\right)
\end{equation*}
for $t\sample\mathbb{Z}_p$. Because $r_0 + t$ is uniformly distributed in $\mathbb{Z}_p$, the output signature has the desired distribution (take $r = r_0 + t$).
\end{proof}

Finally, we note that Theorem \ref{thm:anonymity} and its proof  easily extend to the ``generalized'' signatures ($\mathbf{GeneralizedSign}$) discussed in \appref{ssec:enc_construction}, where a key for pattern $P$ can generate a signature for \emph{another} pattern $S$ where $P$ matches $S$. In this case, the theorem guarantees that for any message $m$, pattern $S$, and two well-formed keys $k_1$ and $k_2$ with patterns $P_1$ and $P_2$ (in the same resource hierarchy), \textbf{if} $P_1$ matches $S$ and $P_2$ matches $S$, \textbf{then} signatures for pattern $S$ and message $m$ generated using $k_1$ are distributed identically to signatures for pattern $S$ and message $m$ generated using $k_2$, even if $P_1 \neq P_2$. The proof for this generalization of Theorem \ref{thm:anonymity} is very similar to the proofs of Theorem \ref{thm:history_independence} and Theorem \ref{thm:anonymity}.

\subsection{Proof of Theorem~\ref{thm:newprot_game}}

Below, we prove Theorem~\ref{thm:newprot_game} from \secref{s:encryption_guarantee}. Some intuition behind the proof is that the challenger in the game in Theorem~\ref{thm:newprot_game} (which is also the adversary in the IND-sWKID-CPA game~\cite{abdalla2007generalized}), maintains, for each \ent{}, the set of keys it has.
It \emph{lazily} requests these keys from the IND-sWKID-CPA challenger as \ents{} are compromised. Therefore, it maintains (1) a \emph{key set} for each \ent{}, storing the keys requested from the IND-sWKID-CPA challenger, and (2) a \emph{pattern set} for each \ent{}, storing patterns corresponding to additional keys that the \ent{} would have in the normal course of \newprot{}, but which have not been requested from the IND-sWKID-CPA challenger yet.
Requesting keys lazily is crucial because an uncompromised \ent{} in Theorem~\ref{thm:newprot_game} may possess a secret key capable of decrypting the challenge ciphertext.
\begin{proof}[Proof of Theorem \ref{thm:newprot_game}]
We show that, given an adversary $\mathcal{A}$ with non-negligible advantage in the game in Theorem~\ref{thm:newprot_game}, one can construct an algorithm $\mathcal{B}$ with non-negligible advantage in the IND-sWKID-CPA security game~\cite{abdalla2007generalized}. We denote as $\mathcal{C}$ the IND-sWKID-CPA security challenger. $\mathcal{B}$ maintains the following state: (2) a mapping from \ent{} name (in $\{0, 1\}^*$) to a key set for that \ent{}, and (3) a mapping from \ent{} name (in $\{0, 1\}^*$) to a pattern set for that \ent{}. These two maps are initialized as follows; each has a single entry for the name corresponding to the authority. The authority's key set is empty, and its pattern set contains one element, namely a pattern containing $\bot$ in all components (i.e., with all slots free).

$\mathcal{B}$ first runs the game with $\mathcal{A}$ as the challenger. $\mathcal{A}$ specifies the pair (URI, time) that it will attack at the beginning of the game. $\mathcal{B}$ parses (URI, time) into the pattern $S^*$ and gives it to $\mathcal{C}$. $\mathcal{C}$ generates the master key pair $(\mpk{},\msk{}) \leftarrow \mathsf{Setup}$ and gives $B$ the master public key $\mpk{}$. $\mathcal{B}$ forwards $\mpk{}$ to $\mathcal{A}$. For any of three queries from $\mathcal{A}$ in Phase 1, $\mathcal{B}$ processes it as following:
\begin{itemize}[leftmargin=*, noitemsep]
    \item \calA{} asks \calB{} to create \anent{}: \calB{} returns a fresh name in $\{0, 1\}^*$ corresponding to the new \ent{}. \calB{} creates mappings from this name to an empty set, for both the key set and pattern set, indicating that this new principal has not been delegated any keys.
    \item \calA{} asks \calB{} for the key set of a \ent{} $p$: \calB{} finds in its local state the key set and pattern set for $p$. For each pattern in $p$'s pattern set, it queries \calA{} for the corresponding \enc{} secret key. It adds each \enc{} secret key to $p$'s key set, and then replaces $p$'s pattern set in its local state with an empty set. Then it returns the keys in $p$'s key set to \calA{}. Note that \calB{} will not query \calC{} the secret key for a pattern that matches $S^*$, because, in the game in Theorem \ref{thm:newprot_game}, \calA{} is not allowed to request a key set containing a key whose URI and time match the challenge pair (URI, time). Also note that the keys given to \calA{} are distributed exactly as they would be in the \newprot{} protocol, because the underlying \enc{} scheme is assumed to be history-independent.
    \item \calA{} asks an \ent{} $p$ to make a delegation of \calA{}'s choice of another \ent{} $q$: \calB{} finds in its local state the key set and pattern set for $p$. \calB{} obtains the pattern corresponding to each key in $p$'s key set. Let $M$ be the set containing those patterns. \calB{} computes the set $N$, which is the union of $M$ and $p$'s pattern set. Based on the patterns in $N$, \calB{} computes the patterns corresponding to the keys that $p$ would generate and delegate to $q$. For each such key, \calB{} adds the corresponding pattern to $q$'s pattern set.
\end{itemize}
At the end of Phase 1, \calA{} outputs two equal-length challenge messages $m_0$ and $m_1$, and sends them to \calB{}. \calB{} then forwards $m_0$ and $m_1$ to \calC{}. \calC{} chooses a random bit $b$, and sends \calB{} the ciphertext of $m_b$. \calB{} then forwards the ciphertext to \calA{}.

In Phase 2, \calA{} makes additional queries as in Phase 1, and \calC{} can process them as before.

Finally, \calA{} will return the bit $b'$. \calB{} returns $b'$ to \calC{}. Because every response that \calB{} makes to \calA{} is distributed identically to the results of actually playing the game in Theorem \ref{thm:newprot_game}, \calA{} will guess $b' = b$ with non-negligible advantage. Thus, \calB{} wins the IND-sWKID-CPA game with non-negligible advantage.
\end{proof}

\subsection{Security Guarantee of \newprot{}'s Revocation}
\begin{theorem}
\label{thm:revoc_game}
Suppose revocation in \newprot{} is instantiated with a Selective-ID CPA-secure~\cite{canetti2003forward, boneh2004efficient, abdalla2007generalized}, history-independent \enc{} scheme. Then, there exists no probabilisitc polynomial-time adversary $\mathcal{A}$ who can win the following security game against a challenger $\mathcal{C}$ with more than negligible advantage:\\
\textbf{Initialization.} $\mathcal{A}$ selects a (URI, time) pair and a list \calL{} of revoked \leaves{} to attack.\\
\textbf{Setup.} $\mathcal{C}$ gives $\mathcal{A}$ the public parameters of the \newprot{} instance.\\
\textbf{Phase 1.} $\mathcal{A}$ can make three types of queries to $\mathcal{C}$.
\begin{enumerate}[topsep=0pt, noitemsep, wide, labelwidth=!, labelindent=0pt]
    \item $\mathcal{A}$ asks $\mathcal{C}$ to create \anent{}; $\mathcal{C}$ returns a name in $\{0, 1\}^*$, which $\mathcal{A}$ can use to refer to that \ent{} in future queries. A special name exists for the authority.
    \item $\mathcal{A}$ can ask $\mathcal{C}$ for the key set of any \ent{}; $\mathcal{C}$ gives $\mathcal{A}$ the keys that the \ent{} has. The only restriction is that, at the time this query is made, every key in the requested key set whose URI and time are prefixes of the challenge (URI, time) must have \leaves{} that are entirely in the challenge list \calL{}.
    \item $\mathcal{A}$ can ask $\mathcal{C}$ to make any \ent{} to make a delegation of $\mathcal{A}$'s choice to another \ent{}. The new \ent{} may have restricted pattern or fewer \leaves{}.
\end{enumerate}
\textbf{Challenge.} When $\mathcal{A}$ chooses to end Phase 1, it sends $\mathcal{C}$ two messages, $m_0$ and $m_1$, of the same length. Then $\mathcal{C}$ chooses a random bit $b \in \{0, 1\}$, encrypts $m_b$ under the challenge (URI, time) pair and list \calL{} of revoked \leaves{}, and gives $\mathcal{A}$ all of the ciphertexts.\\
\textbf{Phase 2.} $\mathcal{A}$ can make additional queries as in Phase 1.\\
\textbf{Guess.} $\mathcal{A}$ outputs $b' \in \{0, 1\}$, and wins the game if $b = b'$.
The advantage of an adversary $\mathcal{A}$ is $\left| \Pr[\mathcal{A} \text{ wins}] - \frac{1}{2} \right|$.
\end{theorem}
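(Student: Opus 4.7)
The plan is to reduce Theorem~\ref{thm:revoc_game} to the IND-sWKID-CPA security of \enc{} via a hybrid argument layered on top of the reduction used for Theorem~\ref{thm:newprot_game}. A \newprot{} revocation ciphertext is a vector of $s$ individual \enc{} ciphertexts, one per covering subtree $T_j$ produced from \calL{} by the CS algorithm. Let $P_j$ denote the \enc{} pattern that encodes the challenge URI in the first $\ell_1$ slots, the challenge time in the next $\ell_2$ slots, and $\mathsf{id}(T_j)$ in the last $\ell_3$ slots, as in \secref{ssec:using_revocation}. I would define hybrid $H_j$ as the distribution in which components $1,\ldots,j$ encrypt $m_1$ and components $j+1,\ldots,s$ encrypt $m_0$, so that $H_0$ and $H_s$ are exactly the two challenge distributions of the revocation game, and it suffices to bound each transition $H_{j-1}\to H_j$ and sum over $j$.

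For each $j$ I would construct a reduction $\calB{}$ that targets pattern $P_j$ in the IND-sWKID-CPA game against an \enc{} challenger $\calC{}$. After $\calA{}$ commits to the challenge (URI, time) pair and the revocation list \calL{}, $\calB{}$ computes $T_1,\ldots,T_s$, forwards $P_j$ to $\calC{}$, and relays public parameters back to $\calA{}$. At challenge time, $\calB{}$ submits $(m_0,m_1)$ to $\calC{}$, receives $c^\star$ encrypted under $P_j$, encrypts $m_1$ under $P_1,\ldots,P_{j-1}$ and $m_0$ under $P_{j+1},\ldots,P_s$ itself using the public parameters, and returns the assembled vector as the revocation ciphertext. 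Principal-creation, key-set, and delegation queries are simulated exactly as in the proof of Theorem~\ref{thm:newprot_game}: $\calB{}$ maintains per-principal pattern sets, updates them on delegations, and lazily asks $\calC{}$ for the \enc{} keys of those patterns when a principal is opened. History-independence of \enc{} ensures the lazily sampled keys are distributed identically to those produced in a real \newprot{} execution, so that summing over $j$ yields an overall bound of $s$ times the IND-sWKID-CPA advantage, which is negligible since $s\le n$ is polynomial.

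The hard part is proving that no pattern $P'$ that $\calB{}$ forwards to $\calC{}$ matches $P_j$---a precondition of the IND-sWKID-CPA game. Matching requires (i) $P'$'s fixed URI slots to prefix the challenge URI, (ii) its fixed time slots to prefix the challenge time, and (iii) the revocation node $v$ encoded by $P'$ to be an ancestor of (or equal to) $T_j$. Conditions (i)--(ii) combined with the game's restriction force every \leaf{} of the owning principal $p$ to lie in \calL{}, while by construction every \leaf{} in $T_j$'s subtree is outside \calL{}. The core combinatorial claim is that the delegable-CS key-assignment rules of \secref{s:delegable_cs} make these constraints incompatible with (iii): a \emph{qualifiable} key of $p$ for an ancestor $v$ of $T_j$ would force the (unrevoked) descendants of $T_j$ to belong to $p$'s \leaves{}, a contradiction. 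The delicate case is an \emph{unqualifiable} key of $p$ for a strict ancestor $v$ of $T_j$; by limited delegation in the construction of \appref{app:enc}, such a key consists of only the $(k_0,k_1)$ pair and cannot be derived into a decryption key for $P_j$, so $\calB{}$ can simulate it via a decryption-only variant of the IND-sWKID-CPA oracle, which remains admissible because $P'\neq P_j$ exactly. Once this sub-case is dispatched, the hybrid sum completes the proof.
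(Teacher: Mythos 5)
Your overall strategy is the same as the paper's: decompose the challenge ciphertext into its $s$ constituent \enc{} ciphertexts (the paper calls this count $n$), one per covering subtree, run a hybrid over the components, and reduce each adjacent pair of hybrids to IND-sWKID-CPA while simulating principal-creation, key-set, and delegation queries exactly as in the proof of Theorem~\ref{thm:newprot_game} (per-principal pattern sets, lazy key requests, history-independence for distributional correctness). The only cosmetic difference is the choice of endpoints: you interpolate directly from ``all components encrypt $m_0$'' to ``all components encrypt $m_1$,'' whereas the paper picks a bit $b^*$ itself and interpolates from ``all components encrypt $0$'' (where the adversary has zero advantage) to the real game; the two are equivalent up to a constant factor in the advantage.

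The substantive divergence is your final paragraph. The paper dispatches query admissibility in one sentence (``because of the subset-cover property, the patterns used to query will never match $S^*$''), which is correct for the \emph{qualifiable} keys---your conditions (i)--(iii) argument is exactly right there---but it glosses over the case you isolate: a compromised principal whose \leaves{} all lie in $\mathcal{L}$ still legitimately holds \emph{unqualifiable} keys for ancestors $v$ of some covering root $T_j$ (e.g., the tree root), and the \enc{} pattern of such a key \emph{does} match $P_j$, so the standard IND-sWKID-CPA challenger refuses to answer that query. You are right that limited delegation is what saves the scheme: that key can neither be extended to $P_j$ nor decrypt the $T_j$ component directly. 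But your proposed fix, a ``decryption-only variant of the IND-sWKID-CPA oracle,'' is not part of the security notion the theorem hypothesizes, so it cannot simply be invoked. Making this rigorous requires (a) augmenting the IND-sWKID-CPA game with a limited-delegation key oracle that answers queries for patterns matching the challenge, provided the returned (stripped) key cannot be derived into a key for any pattern equal to the challenge pattern, and (b) showing the construction of \appref{app:enc} remains secure under that augmented game, in the style of the BBG limited-delegation argument; note also that the admissibility condition is not ``$P'\neq P_j$ exactly'' but the non-derivability just stated. In short, your proposal follows the paper's route and is, on this one point, more careful than the paper itself---but the sub-case you correctly flag is not fully closed by your sketch, nor by the paper's.
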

\begin{proof}
The proof of Theorem \ref{thm:revoc_game} is somewhat trickier than the proof of Theorem \ref{thm:newprot_game} because encrypted messages in \newprot{}, when revocation is used, consist of multiple \enc{} ciphertexts. Therefore, we use a hybrid argument. The hybrid game $\hyb^{(0)}$ is one that the adversary $\adv$ has no chance of winning. Hybrid $\hyb^{(n)}$ is a game that perfectly simulates \enc{}'s revocation protocol. We prove using security of the underlying \enc{} scheme that for all $i \in \{1, ..., n\}$, the difference between $\adv$'s advantage in hybrid game $\hyb^{(i - 1)}$ and hybrid game $\hyb^{(i)}$ is negligible.

The number $n$, which controls the number of hybrids, is defined to be the size of the subset cover, which depends on the list \calL{} that $\adv$ declares in the Initialization phase. Note that this is the same as the number of \enc{} ciphertexts in an encrypted message with revocation list \calL{}.

We define the hybrid game $\hyb^{(i)}$, for $i \in \{0, ..., n\}$, as identical to the game given in Theorem \ref{thm:revoc_game}, with the following difference. The \newprot{} ciphertext returned to $\adv$ in the Challenge phase consists of $n$ \enc{} ciphertexts; $\hyb^{(i)}$ generates the first $i$ ciphertexts correctly, and then replaces each of the remaining $n - i$ ciphertexts with an encryption of $0$ under the same pattern. Observe that $\adv$ has no chance of winning $\hyb^{(0)}$, because the challenge ciphertext is chosen independently of the bit $b$ chosen by the challenger. Also observe that $\hyb^{(n)}$ is identical to the game described in Theorem \ref{thm:revoc_game}.

All that remains to prove is that the difference between $\adv$'s advantage in game $\hyb^{(i-1)}$ and game $\hyb^{(i)}$ is negligible for all $i \in \{1, \ldots, n\}$. We do so via a reduction: given a probabilistic polynomial-time adversary $\adv$ whose difference in advantage is non-negligible, we construct a probabilistic polynomial-time adversary \calB{} that wins the IND-sWKID-CPA game with non-negligible probability. In our reduction, \calB{} acts as the challenger in the hybrid game; we denote the challenger in the IND-sWKID-CPA game as \calC{}.

In the Initialization phase, $\mathcal{A}$ specifies the pair (URI, time) and revocation list \calL{} that it will attack. Then, $\mathcal{B}$ computes the subset cover over all \leaves{} not in \calL{}, and selects $\mathsf{ID}_i$, the ID of $i$th subset in the subset cover. $\mathcal{B}$ parses (URI, time) and $\mathsf{ID}_i$ into the pattern $S^*$ and gives it to $\mathcal{C}$. $\mathcal{C}$ generates the master key pair $(\mpk{},\msk{}) \leftarrow \mathsf{Setup}$ and gives $B$ the master public key $\mpk{}$. $\mathcal{B}$ forwards $\mpk{}$ to $\mathcal{A}$.

For any of three queries from $\mathcal{A}$ in Phase 1, $\mathcal{B}$ processes it as it does in the Proof of Theorem \ref{thm:newprot_game}. The only difference is that \calA{} also specifies which \leaves{} are included in each delegation (query \#3), and \calB{} takes this into account when generating the pattern set $N$. Note that, because of the ``subset cover'' property used by tree-based broadcast encryption, the \enc{} patterns used to query \calA{} for \enc{} keys will never match $S^*$.

At the end of Phase 1, \calA{} outputs two equal-length challenge messages $m_0$ and $m_1$, and sends them to \calB{}. \calB{} then chooses a random bit $b^*$. \calB{} computes the \newprot{} ciphertext, which consists of $n$ \enc{} ciphertexts as follows. To compute the $j$th \enc{} ciphertext, where $1 \leq j \leq i - 1$, it encrypts $0$ with the pattern corresponding to the challenge (URI, time) and $\mathsf{ID}_j$.
To compute $j$th \enc{} ciphertext, where $i + 1 \leq j \leq n$, it encrypts $m_{b^*}$ with the pattern corresponding to the challenge (URI, time) and $\mathsf{ID}_j$. To compute the $i$th \enc{} ciphertext, it forwards $0$ and $m_{b^*}$ to \calC{}.
\calC{} chooses a random bit $b$, and sends \calB{} either an encryption of $0$ or an encryption of $m_{b^*}$ depending on $b$.the ciphertext of $m_b$. \calB{} uses this as the $i$th ciphertext. It assembles the $n$ \enc{} ciphertexts, computed as above, into a \newprot{} ciphertext, and forwards them to \calA{}. Note that if $b = 0$, then \calB{} played game $\hyb^{(i - 1)}$, and if $b = 1$, then $\calB{}$ played game $\hyb^{(i)}$.

In Phase 2, \calA{} makes additional queries as in Phase 1, and \calC{} can process them as before.

Finally, \calA{} will return the bit $b'$. \calB{} checks if $b' = b^*$. If $b' = b^**$, then \calB{} guesses that $b = 1$; otherwise, it guesses that $b = 0$. It sends its guess to \calC{}. Because \calA{} is assumed to have a non-negligible difference in advantage between $\hyb^{(i - 1)}$ and $\hyb^{(i)}$, \calB{}'s advantage in the IND-sWKID-CPA game is non-negligible.
\end{proof}

\subsection{Adaptive Security}
A natural question is how to achieve adaptive security. As has been observed for IBE~\cite{boneh2004efficient}, HIBE~\cite{boneh2005hierarchical}, and WKD-IBE~\cite{abdalla2007generalized}, hashing each component of the ID results in adaptive security, but with a loss of security exponential in the size of the hash.
However, if the hash function is modeled as a random oracle, and the number of slots in \enc{} is logarithmic in the security parameter, then the loss in security is polynomial~\cite{abdalla2007generalized} (assuming that the number of slots $\ell$ is logarithmic in the security parameter).
Given that we use a hash function to map URI/time to a pattern (\secref{ssec:uri_time}), this analysis applies to \newprot{}.

\section{Extensions}\label{sec:extensions}

We present two extensions to \newprot{}'s core encryption protocol (\secref{sec:encryption}): (1) generalized subtrees with wildcards in the middle of a URI, and (2) forward secrecy.

\subsection{Beyond Simple Hierarchies}\label{ssec:plus_wildcard}

Thus far, we have considered only the \texttt{*} wildcard at the end of a URI. With \enc{}, we can also place a \texttt{+} wildcard in the middle of a URI, allowing a single component of the URI to remain unspecified. For example, the URI \path{a/+/b} matches all URIs of length 3 where the first component is \texttt{a} and the third component is \texttt{b}; the second component could be anything. To implement the \texttt{+} wildcard, we fill in the components corresponding to \texttt{+} with $\bot$.

The \texttt{+} wildcard is useful in real applications. For example, in the ``smart buildings'' setting, one could imagine a resource hierarchy of the form \path{buildingA/floor2/room/sensor_id/reading_type}, where \texttt{reading\_type} could be either \texttt{temp} or \texttt{hum}. The \texttt{+} wildcard allows one to delegate permission to see only the temperature readings in a building, by granting permission on the URI \path{buildingA/+/+/+/temp}. It is also useful for the time hierarchy. An organization may want to give an employee access to a resource from 8 AM to 5 PM every day, which can be accomplished by using the \texttt{+} wildcard for the slots corresponding to the year, month, and day. See \figref{fig:slots_plus}.

\begin{figure}
    \centering
    \includegraphics[width=\linewidth]{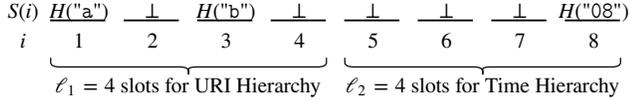}
    \caption{Pattern $S$ for a private key granting access to \texttt{a/+/b/*} at 8 AM every day. The figure uses 8 slots for space reasons; \newprot{} is meant to be used with more slots (e.g., 20).}
    \label{fig:slots_plus}
    \vspace{-3ex}
\end{figure}

\subsection{Forward Secrecy}\label{ssec:forward_security}

Forward secrecy is the property that if a subscriber's decryption key is compromised, the attacker should only be able to decrypt \emph{new} messages visible to the subscriber, not old messages sent before the key was compromised.

Forward secrecy using HIBE has been studied in~\cite{canetti2003forward}. We can apply the same idea to our construction via a straightforward extension to our mechanism for expiry. In our construction of expiry, each subscriber has a collection of keys for each URI or URI prefix it can access that give it access over a time range $[t_1, t_2]$, which can be qualified to any smaller time range $[t_3, t_4]$ where $t_1 \leq t_3 \leq t_4 \leq t_2$. To achieve forward security, each subscriber qualifies the keys for each URI, at each unit of time, to only be valid starting at the \emph{current time} until the same expiry time, and then discards the old keys. This guarantees that, if a key is stolen, it cannot be used to decrypt messages published before the current time.

\fi

\end{document}
